\newtheorem{theorem}{Theorem}
\newtheorem{definition}{Definition}
\newcommand{\ie}{{\em i.e.}}
\newcommand{\eg}{{\em e.g.}}
\newcommand{\et}{{\em et al.}}
\newcommand{\st}{{\em s.t.}}
\begin{document}

\title{Fine-Grained User Profiling for Personalized Task Matching in Mobile Crowdsensing}


\author{Shuo Yang, \IEEEmembership{Student Member,~IEEE,}
Zhenzhe Zheng, \IEEEmembership{Student Member,~IEEE,}
Shaojie Tang, \IEEEmembership{Member,~IEEE,}
Fan Wu, \IEEEmembership{Member,~IEEE,}
~Guihai~Chen,~\IEEEmembership{Senior Member,~IEEE,}
\IEEEcompsocitemizethanks{\IEEEcompsocthanksitem S. Yang, Z. Zheng, F. Wu, and G. Chen are with the Shanghai Key Laboratory of Scalable Computing and Systems, Department of Computer Science and Engineering, Shanghai Jiao Tong University, China. E-mails: \{wnmmxy, zhenzhezheng\}@sjtu.edu.cn; \{fwu, gchen\}@cs.sjtu.edu.cn.

S. Tang is with the Department of Information Systems, University of Texas at Dallas, USA. E-mail: tangshaojie@gmail.com.
\IEEEcompsocthanksitem F. Wu is the corresponding author.
}
}

\IEEEcompsoctitleabstractindextext{
\begin{abstract}
\renewcommand{\raggedright}{\leftskip=0pt \rightskip=0pt plus 0cm}\raggedright
In mobile crowdsensing, finding the best match between tasks and users is crucial to ensure both the quality and effectiveness of a crowdsensing system. Existing works usually assume a centralized task assignment by the crowdsensing platform, without addressing the need of fine-grained personalized task matching. In this paper, we argue that it is essential to match tasks to users based on a careful characterization of both the users' preference and reliability. To that end, we propose a personalized task recommender system for mobile crowdsensing, which recommends tasks to users based on a recommendation score that jointly takes each user's preference and reliability into consideration. We first present a hybrid preference metric to characterize users' preference by exploiting their implicit feedback. Then, to profile users' reliability levels, we formalize the problem as a semi-supervised learning model, and propose an efficient block coordinate descent algorithm to solve the problem. For some tasks that lack users' historical information, we further propose a matrix factorization method to infer the users' reliability levels on those tasks. We conduct extensive experiments to evaluate the performance of our system, and the evaluation results demonstrate that our system can achieve superior performance to the benchmarks in both user profiling and personalized task recommendation.
\end{abstract}
\begin{IEEEkeywords}
Mobile Crowdsensing, Task Matching, User Profiling, Truth Discovery, Recommender System
\end{IEEEkeywords}}

\maketitle

\section{Introduction}
\label{intro}
Due to the rapid development of smart devices and wireless technology, mobile crowdsensing \cite{ganti2011mobile} has risen as an emerging sensing paradigm. It can employ a large number of smart devices to extract and share their local information using the embedded sensors on them. A typical mobile crowdsensing system usually consists of three major components: crowdsensing platform, service requesters, and mobile device users (data contributors). The platform is responsible for handling information requests from the service requesters and publishing sensing tasks to the users through the interaction of their smartphone applications.

A critical problem in crowdsensing is to find the best match between users and tasks. Most of the existing works adopt a \emph{platform-centric model} \cite{yang2012crowdsourcing,zhao2014crowdsource,zhao2014fair,karaliopoulos2015user,zhang2016incentive,jin2015quality}, which allows the platform to make centralized decisions on which users are selected to perform which sensing tasks. These works usually focus on the incentive problem, where a typical procedure goes like this: each user submits a bid reflecting her willingness or cost in participating in a task, and then the platform determines the set of selected users and their payments, so as to optimize certain utility metric (\eg, coverage, revenue, service quality) and satisfy some game-theoretic properties. The underlying assumption behind this type of model is that the users are fully rational and are capable of determining their optimal strategies. However, as pointed out in \cite{karaliopoulos2016first}, this assumption, as well as the setting that each user's preference can be abstracted as a single bidding parameter, could be an oversimplification of the complicated user behaviors.

Another type of task matching systems, referred as \emph{user-centric model}, gives the users more freedom to choose their interested tasks. It has been widely adopted in many commercial crowdsensing systems, such as Waze \cite{waze}, Field Agent \cite{fieldagent}, and Gigwalk \cite{gigwalk}. In these systems, the available tasks are shown to the users via their smartphone applications. The users can manually browse through the task corpus (often with simple built-in filters, such as proximity filter and payment filter), and choose their interested tasks to participate in. However, since the number of the tasks is often really large, it is inefficient for the users to browse page by page searching for suitable tasks. Without an efficient personalized task matching solution, the users may end up selecting tasks that they are not familiar with or not interested in, which may result in a decrement of the quality of their collected sensing data.

Considering the limitations of existing task matching works, we propose to design a personalized task recommender system for mobile crowdsensing, so as to facilitate the match of the users with suitable tasks. Note that in traditional recommender systems, such as movie recommendation, items are recommended based only on customers' preferences \cite{adomavicius2005toward}. Whereas, in mobile crowdsensing, besides the metric of the users' preferences, we also need to take the users' reliability/data quality into consideration. That is because the users may have heterogeneous sensing behaviors towards different tasks, which could influence the quality of their collected data \cite{huang2010you}. Achieving preference- and quality-aware task recommendation can have a positive impact on both attracting the user's further participation and improving the crowdsensing system's effectiveness. However, such a personalized task recommender system is missing in the current crowdsensing literature. Jin \et \cite{jin2015quality} and Wang \et \cite{wang2016quality} studied the quality-aware incentive mechanism design without addressing the need of personalized task recommendation. Karaliopoulos \et \cite{karaliopoulos2016first} proposed to assign the tasks to the users based on the profile of each user's probability of accepting a task, but did not consider the users' reliability information.

Central to the personalized task recommender system is a careful characterization on each user's preference and reliability towards different tasks. However, it is not a trivial task, due to the unique nature of the crowdsensing scenarios. One of the challenges is finding a good way to model the users' preference over different tasks. In some traditional recommendation scenarios, customers' preference can be readily obtained from their previous ratings \cite{adomavicius2005toward}. However, the users in mobile crowdsensing do not typically provide explicit ratings on their preference, \st, we have to infer the users' preference from their implicit feedback, including their task browsing history and task selection record.

The most challenging part is estimating the users' reliability levels. In particular, we have to learn the users' reliability information for different tasks based on their submitted sensing data, if any, so as to build each user a profile characterizing the trustworthiness of the users' data for performing the tasks. Although \emph{truth discovery} algorithms \cite{yin2008truth} can be adopted to jointly estimate the users' data quality and the underlying truths, they cannot fully address the need of user reliability profiling in the context of task recommendation. Note that truth discovery algorithms usually generate a single reliability parameter for each user representing the overall trustworthiness level of the user. However, to conduct personalized task recommendation, the heterogeneity of a user's reliability in different tasks has to be exploited, and thus a more fine-grained reliability profiling of the users should be considered. A possible alternative is to independently generate each user a reliability parameter for each task by applying truth discovery algorithms to the data of each sensing task. Unfortunately, this approach may suffer from scalability issue, and what's worse, a user's reliability for a task cannot be estimated by truth discovery algorithms, if the user did not contribute data to that task. This could often be a problem in real crowdsensing scenarios, especially when the users' data are sparse, \ie, each user only contributes data to only a small number of the tasks. Besides, without the prior knowledge of truth and reliability measures, typical truth discovery algorithms are likely to fail, when the majority of data are inaccurate \cite{li2016survey}.

In this work, we jointly consider the problems of user profiling and personalized task matching in mobile crowdsensing, and propose a personalized task recommender system framework, which recommends tasks to the users based on both the users' preference and reliability. We propose approaches to measure the users' preference and reliability, respectively. First, in profiling the users' preferences, we present a hybrid preference metric that integrates the feedback against both the users' historical performance and the preference of their peers. Then, to tackle the more challenging part of profiling the users' reliability, we model the problem as a semi-unsupervised learning problem, and propose an efficient block coordinate descent algorithm to jointly estimate the users' reliability and the unknown ground truths. We surpass existing truth discovery methods by (1) considering grouping tasks into several categories, (2) taking the information of failed tasks into consideration, and (3) using a small number of available truth data to facilitate the estimation accuracy. Note that a user's reliability for certain task category cannot be estimated if the user did not provide data for those tasks. To address this problem, we further propose a matrix factorization method to estimate the missing entries. We conduct a real-world experiment and a large-scale crowdsensing simulation to evaluate the performance of our methods. The evaluation results show that our proposed methods can achieve superior performance over existing works and our benchmarks.

The main contributions of this work are listed as follows.
\begin{itemize}
\item First, we design a personalized task recommender system framework that matches tasks to the users based on both the users' preference and reliability of the tasks. We propose a method to profile each user's preference over the tasks by exploiting the user's implicit feedback.
\item Second, we model the problem of user reliability profiling as a semi-supervised learning model, and propose an efficient algorithm to estimate the users' reliability and the unknown ground truths simultaneously. We also propose a matrix factorization method to estimate each user reliability for tasks she did not contribute data to.
\item Third, we conduct a real-world crowdsensing experiment and a large-scale simulation to evaluate the performance of our methods. Both the experiment and simulation results show that our proposed methods can achieve significant performance improvements to our benchmarks.

\end{itemize}

The rest of the paper is organized as follows. We first present the system overview in Section \ref{model}, and then introduce the problem formulations in Section \ref{formulation}. In Section \ref{algorithm}, we propose our reliability profiling algorithms. We evaluate our proposed methods and present the evaluation results in Section \ref{evaluation}. In Section \ref{related}, we review the related works. Finally, we conclude this paper in Section \ref{discussion}.

\section{System Overview}
\label{model}
In this section, we present an overview of our proposed personalized task recommender system.

\subsection{Personalized Task Recommendation}

Suppose there are $N$ users and $M$ sensing tasks in the system. The set of users and tasks are denoted by $\mathbb{N}$ and $\mathbb{S}$, respectively. We consider a \emph{user-centric model}, where the users can browse the tasks in their smartphone applications and choose to participate in their interested tasks. If a user $i$ wants to participate in a task $j$, she can click on some button to inform the platform her participation. After that, the user will use her smartphone to collect and then submit sensing data to the platform. Let $x_{i,j}$ denote the data submitted by the user $i$ to the task $j$. The ground truth of the task $j$ is denoted by $x_j^*$, which is usually unavailable to the platform.

We tend to build a personalized task recommender system, where the tasks are recommended to the users based on a joint consideration of the users' preference and reliability. Specifically, for each task $j$, suppose each user $i$'s preference and reliability regarding the task is denoted by $p_{i,j}$ and $q_{i,j}$, respectively. We propose a recommendation score $Score(i,j)$ that takes both the user $i$'s preference and reliability for the task $j$ into account, \ie, $Score(i,j) = f(p_{i,j}, q_{i,j})$, where the function $f()$ outputs the recommendation metric based on the two input parameters.

Instances of the function $f()$ can be specified by the platform according to its need. Simple instances may include a linear combination (\ie, $Score(i,j) = \gamma \, p_{i,j} + (1-\gamma) q_{i,j}$) or a product (\ie, $Score(i,j) = p_{i,j} * q_{i,j}$) of the two parameters. One may also treat the personalized recommendation as a constrained optimization problem, where one parameter is applied into the optimization term while the other as the constraint. For example, for each user $i \in \mathbb{N}$, we want to recommend such task $j' \in \mathbb{S}$ that maximizes the user's preference and also satisfies the constraint that the user's reliability for the task should be above a certain threshold. More formally:
\begin{align}
\forall i \in \mathbb{N}, \,\,\,\, & j' \leftarrow \textrm{arg}\, \underset{j \in \mathbb{S}}{\textrm{max}} \,\, p_{i,j}, \notag \\
\textrm{s.t.} \,\,\,\, & q_{i,j'} \geq q_{req},
\end{align}
where $q_{req}$ is the minimum required reliability for a user to perform a task.

Central to the system model is the users' preference and reliability measures. To that end, we need to carefully examine the historical data of the crowdsensing system, in order to acquire profiles of the users' preference and reliability.

\subsection{User Preference Profiling}
\label{sec-pre-profiling}

To characterize the users' preference on the tasks, the users' feedback information is needed. However, in mobile crowdsensing, the users' explicit feedback (\eg, ratings, like or dislike) is usually unavailable. Thus, we have to exploit the users' implicit feedback. Fortunately, the crowdsensing platform can have access to each user's browsing history on the application, including which tasks the user has browsed, selected, and successfully completed. This information can be used to infer the users' preference on the tasks, from two different perspectives, \ie, either against the user's historical performance (content-based methods) or the preferences of other similar users (collaborative methods) \cite{adomavicius2005toward}.

\subsubsection{Content-Based Method}

Each task has many attributes, including time, location, travel distance, payment, category, and so on. Along with the users' task selection choices (selected or not), this information can be regarded as training data. By applying classification methods, such as logistic regression or Bayesian classifier, we can build a classifier for each user to infer her probability of selecting each task \cite{karaliopoulos2016first}. We let $P_i(j)$ denote the probability of the user $i$ selecting the task $j$.

\subsubsection{Collaborative Method}

Let $U$ denote the users' task preference matrix, where the entry $u_{i,j}$ indicates the user $i$'s preference over the task $j$. We assign the value of each $u_{i,j}$ by mapping the user's task browsing history to a task preference value, \ie,
\begin{equation}
u_{i,j} = \left\{
\begin{array}{ll}
\textrm{N/A} & \textrm{if $i$ did not browse task $j$}, \\
0.5 & \textrm{if $i$ browsed but not selected task $j$}, \\
1 & \textrm{if $i$ browsed and selected task $j$}. \\
\end{array}
\right.
\end{equation}
Then, we can apply state-of-the-art collaborative filtering methods to predict these missing entries \cite{koren2009matrix}.

Both of the above two methods have their limitations. On one hand, the content-based method may suffer from an overspecialization problem, \ie, it will only recommend a user tasks that are similar to those she has already selected. On the other hand, the collaborative methods may not perform well when the preference matrix is sparse. To alleviate the limitations of these two methods, we propose a hybrid recommendation approach. To do so, we define each user $i$'s preference for each task $j$ as a linear combination of the content-based characteristic and the collaborative-based characteristic, \ie,
\begin{equation}
q_{i,c} = \eta P_i(j) + (1 - \eta) u_{i,j},
\end{equation}
where $\eta \in [0,1]$ is a hyperparameter.

Many previous works on recommender system have investigated the problem of exploiting customers' implicit feedback in different application contexts. The intuitions of them can be further incorporated to improve our modelling of the users' preferences. Possible extensions may include further considering the users' preference over each category \cite{yuen2015taskrec,bao2012location}, incorporating the implicit negative feedback \cite{carroll1993explicit}, multidimensionality of recommendations \cite{adomavicius2005toward}, and Bayesian personalized ranking \cite{rendle2009bpr}. In the rest of the paper, we tend to put our most efforts on user reliability profiling, which is the most challenging part of the system.

\subsection{User Reliability Profiling}

A user's reliability in performing a sensing task is measured by the quality of her contributed data. Intuitively, if a user's contributed data are accurate, \ie, close to the ground truths, the user would have a higher reliability level, and vice versa. Note that in mobile crowdsensing scenarios, the ground truths are usually unavailable, thus we cannot directly measure the users' reliability level by comparing their data with the ground truths. 

To address this problem, one possible approach is to adopt \emph{truth discovery} algorithms, which are proposed to resolve conflicts in data provided by heterogeneous data sources. Existing truth discovery algorithms, \eg \cite{li2014confidence,li2014resolving,su2014generalized,ouyang2014truth,meng2015truth,yang2017designing}, usually follow a similar unsupervised procedure: first initializing the ground truth estimation using a simple majority voting or averaging scheme, and then iteratively updating reliability and ground truth based on the current estimation of the other. Although truth discovery algorithms have been performed well on many web mining tasks, they cannot be directly applied due to the following unique requirements of our reliability profiling contexts.



\textbf{Multi-dimensional Reliability}: Existing truth discovery algorithms usually output a single reliability parameter for each user characterizing the overall trustworthiness of the user. Whereas, with the objective of personalized task recommendation, differentiation among the tasks is needed, \st, we need to estimate the users' heterogeneous reliability levels for different tasks. Besides, we noticed that in mobile crowdsensing, there are a large number of tasks, where different tasks may require different data collection behaviors, thus a user's reliability may vary towards different tasks. For example, a user who is used to put her smartphone in the bag may fail to provide accurate data in measuring the surrounding noise, while can still have high reliability level in monitoring traffic congestions \cite{yang2017designing}. Thus, a more fine-grained reliability profiling method is needed to character the users' multi-dimensional reliability.

\textbf{Scalability}: One natural idea to address the multi-dimensional reliability problem is to apply existing truth discovery algorithms to each task independently and generate each user $i$ a reliability measure $q_{i,j}$ for each task $j$. However, due to the large number of tasks, calculating a reliability parameter per user per task is not scalable. Besides, estimating each user's reliability based only on her data to a single task may be susceptible to noise, and thus cannot accurately reflect the user's reliability level. Therefore, our reliability profiling method should not only provide a fine-grained reliability estimation, but also be scalable under a large number of tasks.

\textbf{Robustness}: Most truth discovery algorithms start with a uniform initialization of truth values or reliability values. As a result, their performance relies on the assumption that the most users' are reliable. However, when this assumption fails, the iterative computation of truth estimation and reliability estimation may move towards incorrect directions, leading to poor estimation accuracy. This problem, referred as ``initialization problem'' could often occurs in mobile crowdsensing scenarios, due to the uncertainty of each individual human contributor. Thus, it is crucial to design a reliability estimation algorithm that is robust to such scenarios.

\textbf{Complete Reliability Characterization}: Note that the users' reliability values are estimated based on the relative relative accuracy of the their data. In consequence, a user's reliability for certain dimension cannot be estimated if the user did not provide data to that dimension. This may not be a problem in many truth discovery scenarios where their main goal is to infer the unknown ground truths. However, in the context of personalized task recommendation, we have to obtain a complete characterization of the users' reliability levels. Thus, we need to propose a method to predict each user's reliability for those dimensions that the user did not provide data to.

\textbf{Different Data Types}: Different sensing tasks may have different data types. For example, a traffic congestion task may require categorical data (\eg, no congestion, medium congestion, or high congestion), while a noise monitoring task may require continuous numerical data (\ie, the noise levels of the users' surrounding environment). Thus, the reliability profiling algorithm needs to be carefully designed to handle both categorical and continuous data types.

Our proposed user reliability profiling methods are carefully designed to address the above requirements. Specifically, for the multi-dimensional reliability and the scalability issues, we classify tasks into a number of categories and estimate the users' reliability for each category independently. As for the robustness issue, we propose a semi-supervised learning framework that exploits few available truth knowledge to improve the estimation accuracy. We also propose a matrix factorization method to predict the missing entries in reliability estimation. The issue of different data types is taken care of by considering different loss functions. In the subsequent sections, we present the problem formulation and algorithm design of our user reliability profiling problem respectively. 

\section{Problem Formulation}
\label{formulation}
In this section, we formalize the user reliability profiling problem. We first present the problem model, and then propose a preliminary version and two enhancements of our problem. One enhancement is to incorporate the information of failed tasks, and the other is to integrate a small portion of truth data to improve the estimation accuracy.

\subsection{Problem Model}
\label{urp-model}

To model the users' multi-dimensional reliability, we tend to take the similarities among the tasks into consideration by classifying the tasks into different categories, where the tasks within each category focus on a similar sensing target. For example, some category only focuses on noise monitoring tasks, and an other focuses on traffic congestion monitoring. The classification of the tasks is common in current crowdsensing applications, \eg, Waze \cite{waze}. It can be done by the platform's direct designation in the task publication phase, or by applying text classification techniques \cite{forman2003extensive} to automatically analyze the descriptions of the tasks. Specifically, we categorize the $M$ tasks into $C$ categories ($C \ll M$). For each category $c \in \{1, \ldots, C\}$, the set of the tasks belong to the category is denoted by $\mathbb{S}_c$ ($\mathbb{S}_c \subseteq \mathbb{S}$). For simplicity, we assume that each task $j \in \mathbb{S}$ only belongs to one category, thus the sets $\mathbb{S}_1, \ldots, \mathbb{S}_C$ are mutually disjoint. More general situations will be discussed in Section \ref{discussion}. For each task category $c$, let $q_{i,c}$ denote each user $i$'s reliability of the task category. The user reliability profiling problem is to infer the users' reliability for each category. More formally:

\begin{definition}[User Reliability Profiling Problem]
Given a set of users $\mathbb{N}$, a set of interested tasks $\mathbb{S}$, and the users' contributed data $\{x_{i,j}| i \in \mathbb{N}, j \in \mathbb{S}\}$, the user reliability profiling problem aims to estimate the unknown ground truths $\{x_j^* | j\in \mathbb{S}\}$, and the users' reliability matrix $Q \in \mathbb{R}^{N \times C}$, where $C$ is the dimension of each user $i$'s reliability.
\end{definition}

\subsection{Preliminary Problem Formulation}
\label{urp-basic}

We assume that the tasks in different categories are independent, \st, we can estimate the users' reliability for each category separately.
Let $\mathbb{N}_c$ denote the set of users who contributed data to tasks in category $c$. To estimate the users' reliability, for each category $c$, we aim to solve the following optimization problem.
\begin{align}
\underset{\{q_{i,c}\},\{\hat{x}^*_j\}}{\textrm{min}} \,\,\,\, &
\sum_{i \in \mathbb{N}_c} \sum_{j \in \mathbb{S}_c} y_{i,j} \, q_{i,c} \, L(x_{i,j},\hat{x}_j^*), \notag \\
\textrm{s.t.} \,\,\,\, & \delta(\{q_{i,c}\})=1
\label{typical-truth-discovery}
\end{align}
where $y_{i,j}$ indicates if the user $i$ has contributed data to the task $j$, $\hat{x}_j^*$ is our estimation for the task $j$'s ground truth, and $\delta()$ is a regularization function. Following the convention of truth discovery literature \cite{li2014resolving}, we adopt the exponential regularization function, \ie, $\delta(\{q_{i,c}\}) = \sum_{i \in \mathbb{N}_c} \textrm{exp}(-q_{i,c})$. The loss function $L()$ measures the distance between a user's data and the estimated truth. For continuous data, $L()$ can be defined as the squared distance, \ie, $L(x,\hat{x}^*)=(x-\hat{x}^*)^2$, while for categorial data, $L()$ can be defined as the $0/1$ distance, \ie, $L(x,\hat{x}^*)=0$ if $x=\hat{x}^*$, and $1$ otherwise. An intuitive interpretation of the problem formulation is that the ground truth should be close to the data contributed by reliable users, and the users whose data are close to the ground truth should have high reliability levels.

\subsection{Incorporating Information of Failed Tasks}
\label{urp-v1}

We observe that in practice, the users may select certain tasks, but did not successfully complete them (\eg, decide to terminate the sensing procedure half way). The phenomenon, referred as \emph{failed tasks}, is likely to reflect the users' unreliability in performing certain tasks. In this part, we improve the above problem formalization by taking this issue into account.

We first introduce some notations. Among the set of tasks in each category $c$, we let $\mathbb{S}_{i,c}$ denote the set of tasks the user $i$ selected, and $\mathbb{D}_{i,c}$ the set of tasks the user $i$ has successfully completed, where $\mathbb{D}_{i,c} \subseteq \mathbb{S}_{i,c} \subseteq \mathbb{S}_c$. For each category $c$, we calculate each user $i$'s task completion ratio $r_{i,c}$, which is defined as the number of tasks the user $i$ has finished over the number of tasks the user $i$ has selected, \ie, $r_{i,c} = \frac{|\mathbb{D}_{i,c}|}{|\mathbb{S}_{i,c}|}$. We revise the original formulation by multiplying a penalty term to $q_{i,c}$. The revised problem is presented as follows.
\begin{align}
\underset{\{q_{i,c}\},\{\hat{x}^*_j\}}{\textrm{min}} \,\,\,\, &
\sum_{i \in \mathbb{N}_c} \sum_{j \in \mathbb{S}_c} y_{i,j} \, q_{i,c} \, g(r_{i,c})\, L(x_{i,j},\hat{x}_j^*), \notag \\
\textrm{s.t.} \,\,\,\, & \sum_{i \in \mathbb{N}_c} \textrm{exp}(- q_{i,c} \, g(r_{i,c}))=1,
\end{align}
where $g(x)= 1 - \textrm{log}(x)$ is a function mapping each user's completion ratio to a penalty. We can see that the users who have failed tasks will receive a completion ratio less than 1, and thus their reliability outputs should be less than the ones estimated by the previous method shown in Equation \ref{typical-truth-discovery}. An extreme case is that some user $i$ may select multiple tasks but completed zero (\ie, $\mathbb{S}_{i,c}>0$ and $\mathbb{D}_{i,c}=0$). In this case, the system cannot generate a reliability estimation for the user. We will handle this problem in Section \ref{subsection-missing}.

\subsection{Incorporating Available Ground Truths}
\label{urp-v2}


The above formulation extends the basic truth discovery problem, which is built upon an underlying assumption that the majority of data are reliable. Unfortunately, it may suffer from the initialization problem, \ie, when most of the data are unreliable, the above estimation procedure may have bad performance \cite{li2016survey}. To tackle this issue, we propose a \emph{semi-supervised} learning framework, which incorporates a small number of ground truths to improve the estimation accuracy. To this end, the platform may intentionally add a few tasks with known ground truths into the task corpus to collect additional information on the users' reliability, whereas the users have no idea which tasks are inserted by the platform. The platform may also sample a few tasks, and employ some trusted workers to obtain their ground truths. Several heuristic methods can be applied to choose the sampled set of tasks. For example, we may choose the sampled tasks randomly, choose the tasks whose data have the largest variations, or choose the tasks which have the most data contributors.

We let $\mathbb{S}$ denote the set of tasks with unknown ground truths, and $\mathbb{O}$ denote the set of tasks that are intentionally inserted by the platform with known truth information. For each category $c$ of tasks, we let $\mathbb{S}_c$ and $\mathbb{O}_c$ denote the set of the tasks without and with prior ground truths respectively.

Having the ground truths of some tasks in hand, we propose to leverage those information to further enhance our estimation accuracy. To distinguish the notations, we let $\hat{x}^*_j$ denote the estimation of the ground truth ($j \in \mathbb{S}$), and $x^*_o$ denote the known truth ($o \in \mathbb{O}$). Then, for each category $c$, the modified learning optimization problem is given by
\begin{align}
\underset{\{q_{i,c}\},\{\hat{x}^*_j\}}{\textrm{min}} \,\, & \sum_{i \in \mathbb{N}_c} q_{i,c} \, g(r_{i,c}) \Big( \sum_{j \in \mathbb{S}_c} y_{i,j} \, L(x_{i,j}, \hat{x}^*_j) \notag \\
& \,\,\,\,\,\,\,\,\,\,\,\, + \alpha \sum_{o \in \mathbb{O}_c} y_{i,o} \, L(x_{i,o}, x_o^*) \Big), \notag \\
\textrm{s.t.} \,\,\,\, & \sum_{i \in \mathbb{N}_c} \textrm{exp}(- q_{i,c} \, g(r_{i,c}))=1,
\label{semi-final}
\end{align}
where $\alpha$ is a hyper parameter controlling the relative weight of the second loss terms. We can see that the second loss term $\sum_{o \in \mathbb{O}_c} y_{i,o} \, L(x_{i,o}, x_o^*)$ is constant for each user $i$ in each task category $c$. We let $\epsilon_{i,c}$ denote the term $\sum_{o \in \mathbb{O}_c} y_{i,o} \, L(x_{i,o}, x_o^*)$, and the problem presentation can be simplified as follows.
\begin{align}
\underset{\{q_{i,c}\},\{\hat{x}^*_j\}}{\textrm{min}} \,\, & \sum_{i \in \mathbb{N}_c} q_{i,c} \, g(r_{i,c}) \Big( \sum_{j \in \mathbb{S}_c} y_{i,j} \, L(x_{i,j}, \hat{x}^*_j) + \alpha \epsilon_{i,c} \Big) \notag \\
\textrm{s.t.} \,\,\,\, & \sum_{i \in \mathbb{N}_c} \textrm{exp}(- q_{i,c} \, g(r_{i,c}))=1.
\label{semi-final-simple}
\end{align}

We summarize the frequently used notations in Table 1.

\begin{table}[!t] \centering
\caption{Frequently Used Notations}
\label{notation}
\begin{tabular}{cl}
\hline
\hline
Notation & Description \\
\hline
$i, N, \mathbb{N}$ & User, number of users, and the set of users \\
$j, M, \mathbb{S}$ & Task, number of tasks, and the set of tasks \\
$Score(i,j)$ & Recommendation score for user $i$ and task $j$ \\
$c, C$ & Task category, and number of categories \\
$p_{i,j}$ & User $i$'s preference for task $j$ \\
$q_{i,c}$ & User $i$'s reliability in task category $c$ \\
$\mathbb{S}_c$ & The set of tasks in category $c$ \\
$\mathbb{N}_c$ & The set of users contributed data to $\mathbb{S}_c$ \\
$x_{i,j}$ & User $i$'s data for task $j$ \\
$x^*_j$ & Ground truth of task $j$ \\
$\hat{x}^*_j$ & Estimation of the task $j$'s ground truth \\
$y_{i,j}$ & If user $i$ contributed data to task $j$ \\
$\mathbb{S}_{i,c}$ & The set of tasks user $i$ selected in $\mathbb{S}_c$ \\
$\mathbb{D}_{i,c}$ & The set of tasks user $i$ finished in $\mathbb{S}_c$ \\
$r_{i,c}$ & User $i$'s task completion ration in $\mathbb{S}_c$ \\
$\mathbb{O}$ & The set of tasks with known ground truths \\
$\mathbb{O}_c$ & The set of tasks belong to $\mathbb{O}$ and in $\mathbb{S}_c$\\
$\mathbb{N}^o_c$ & The set of users contributed data to $\mathbb{O}_c$ \\
\hline
\hline
\end{tabular}
\end{table}

\section{User Reliability Profiling Algorithm}
\label{algorithm}
In this section, we first propose an algorithm to solve the user reliability profiling problem formulated above. Then, we further propose a matrix factorization method to estimate each user's reliability for the task categories that lack the user's historical performance.

\subsection{Estimating Users' Reliability}

In the problem formulated in Equation \ref{semi-final-simple}, two sets of variables need to be estimated, \ie, the users' reliability levels and unknown ground truths. We propose an efficient block coordinate descent algorithm to solve it. The idea of the algorithm is to fix one set of variables to solve the other, and repeat this process until convergence. Since the estimation process for each category can be done independently, parallel computing can be adopted to speed up the entire calculation process. For each task category $c$, we perform the following three steps: parameter initialization, truth update, and reliability estimation.

\setcounter{subsubsection}{-1}
\subsubsection{\textbf{Parameter Initialization}}

In the parameter initialization phase, we assign initial values to one set of the variables to give the learning algorithm a starting point. Existing truth discovery algorithms either initialize the unknown ground truths using a simple majority voting or averaging scheme, or uniformly initialize the reliability parameters. As pointed out in \cite{li2016survey,yang2017designing}, random or uniform initialization may result in poor estimation performance, which is especially true when most data are unreliable.

To mitigate this problem, we propose to enhance the initialization of the users' reliability parameters $\{q_{i,c}\}$ by incorporating the prior knowledge of available ground truths. The idea is to leverage the known truth knowledge to give related users good initial estimations of their reliability. Specifically, for each category $c$, let $\mathbb{N}_c^o$ denote the set of users who contributed data to tasks in $\mathbb{O}_c$. For the users in $\mathbb{N}_c^o$, we initialize their reliability by solving the following problem.
\begin{align}
\underset{\{q_{i,c}\}, i \in \mathbb{N}_c^o}{\textrm{argmin}} & \,\, \sum_{i \in \mathbb{N}_c^o} \sum_{o \in \mathbb{O}_c} y_{i,o} \, q_{i,c} \, g(r_{i,c}) \, L(x_{i,o}, x^*_o), \notag \\
\textrm{s.t.} & \,\, \sum_{i \in \mathbb{N}_c^o} \textrm{exp}(- q_{i,c} \, g(r_{i,c})) = \frac{|\mathbb{N}_c^o|}{|\mathbb{N}_c|}.
\label{reliability-ini-1}
\end{align}
The above problem is convex, thus we can apply the method of Lagrangian multipliers to solve it.

As for the remaining users in $\mathbb{N}_c \setminus \mathbb{N}_c^o$, since they did not contribute data to tasks whose ground truths are known, no prior knowledge can be applied. Thus, their reliability parameters are uniformly initialized such that
\begin{equation}
\sum_{i \in \mathbb{N}_c \setminus \mathbb{N}_c^o} \textrm{exp}(- q_{i,c} \, g(r_{i,c})) = 1- \frac{|\mathbb{N}_c^o|}{|\mathbb{N}_c|}.
\label{reliability-ini-2}
\end{equation}


Solving Equation \ref{reliability-ini-1} and Equation \ref{reliability-ini-2}, we have the initialization of the users' reliability parameters as follows:
\begin{equation}
q_{i,c} = \left\{
\begin{array}{ll}
\frac{\textrm{log} \Big( \frac{|\mathbb{N}_c| \sum_{i \in \mathbb{N}_c^o} \sum_{o \in \mathbb{O}_c} y_{i,o} L(x_{i,o}, x^*_o)}{|\mathbb{N}_c^o| \sum_{o \in \mathbb{O}_c} y_{i,o} L(x_{i,o}, x^*_o)} \Big)}{g(r_{i,c})} & \textrm{if} \, i \in \mathbb{N}_c^o, \\
\frac{\textrm{log}(|\mathbb{N}_c|)}{g(r_{i,c})} & \textrm{if} \, i \in \mathbb{N}_c \setminus \mathbb{N}_c^o.
\end{array}
\right.
\end{equation}

\subsubsection{\textbf{Truth Update}}

After obtaining an initial estimation of the users' reliability, we can update the estimation of truths by treating the estimated reliability parameters $\{q_{i,c}\}$ as fixed values. Then, the truth of each task $j \in \mathbb{S}_c$ can be updated using the following rule.
\begin{equation}
\{\hat{x}_j^*\} \leftarrow \underset{\{\hat{x}_j^*\},j \in \mathbb{S}_c }{\textrm{argmin}} \sum_{i \in \mathbb{N}_c} q_{i,c} \, g(r_{i,c}) \Big( \sum_{j \in \mathbb{S}_c} y_{i,j} \, L(x_{i,j}, \hat{x}^*_j) + \alpha \epsilon_{i,c} \Big)
\label{truth-update}
\end{equation}

\begin{theorem} 
Given the users' reliability parameters, the optimization problem in Equation \ref{truth-update} can be optimally solved. For continuous data type, the optimal solution is given by
\begin{equation}
\hat{x}_j^* = \frac{\sum_{i \in \mathbb{N}_c} q_{i,c} \, y_{i,j} \, x_{i,j} \, g(r_{i,c})}{\sum_{i \in \mathbb{N}_c} q_{i,c} \, y_{i,j} \, g(r_{i,c})}.
\label{hahaah1}
\end{equation}
As for categorial data type, the solution is
\begin{equation}
\hat{x}_j^* = \underset{x'_j \in \{x_{i,j}\}}{\textrm{argmax}} \sum_{i \in \mathbb{N}_c} q_{i,c} \, y_{i,j} \, g(r_{i,c}) \, \bm{1}(x_{i,j}, x'_j),
\label{hahaha2}
\end{equation}
where $\bm{1}(x,y)=1$ if $x=y$, and 0 otherwise.
\end{theorem}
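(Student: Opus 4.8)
The plan is to exploit two structural features of the objective in Equation \ref{truth-update}: the term $\alpha \epsilon_{i,c}$ is constant with respect to the decision variables $\{\hat{x}_j^*\}$, and the remaining loss separates over tasks. First I would drop the additive constant $\sum_{i \in \mathbb{N}_c} q_{i,c}\, g(r_{i,c})\, \alpha \epsilon_{i,c}$, which does not affect the argmin, and then interchange the order of summation so that the objective becomes $\sum_{j \in \mathbb{S}_c} F_j(\hat{x}_j^*)$ with $F_j(\hat{x}_j^*) = \sum_{i \in \mathbb{N}_c} q_{i,c}\, g(r_{i,c})\, y_{i,j}\, L(x_{i,j}, \hat{x}_j^*)$. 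Since each $\hat{x}_j^*$ appears in exactly one summand $F_j$, the joint minimization decouples into $|\mathbb{S}_c|$ independent scalar minimizations, one per task. Establishing this separability is the crux; everything afterward is a per-task calculation, and there is no deep obstacle beyond it.

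For the continuous case I would substitute $L(x, \hat{x}^*) = (x - \hat{x}^*)^2$ and observe that $F_j$ is a nonnegatively weighted sum of convex quadratics, hence convex in $\hat{x}_j^*$; here I rely on the weights $q_{i,c}\, g(r_{i,c})\, y_{i,j}$ being nonnegative, which follows from $q_{i,c} \geq 0$, $y_{i,j} \in \{0,1\}$, and $g(r) = 1 - \log r \geq 1$ for $r \in (0,1]$. Setting the first derivative $F_j'(\hat{x}_j^*) = -2\sum_{i \in \mathbb{N}_c} q_{i,c}\, g(r_{i,c})\, y_{i,j}\,(x_{i,j} - \hat{x}_j^*)$ to zero and solving for $\hat{x}_j^*$ yields exactly Equation \ref{hahaah1}; the second derivative $2\sum_{i} q_{i,c}\, g(r_{i,c})\, y_{i,j} \geq 0$ confirms the stationary point is the global minimizer.

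For the categorical case I would use the identity $L(x, \hat{x}^*) = 1 - \bm{1}(x, \hat{x}^*)$ to rewrite $F_j(\hat{x}_j^*) = \sum_{i} q_{i,c}\, g(r_{i,c})\, y_{i,j} - \sum_{i} q_{i,c}\, g(r_{i,c})\, y_{i,j}\, \bm{1}(x_{i,j}, \hat{x}_j^*)$, whose first term is constant. Minimizing $F_j$ is therefore equivalent to maximizing the weighted agreement score $\sum_{i} q_{i,c}\, g(r_{i,c})\, y_{i,j}\, \bm{1}(x_{i,j}, \hat{x}_j^*)$, which gives Equation \ref{hahaha2}. The one subtlety I would address is the restriction of the argmax to the observed label set $\{x_{i,j}\}$: any candidate label never reported by a contributing user earns agreement score zero, so it can never strictly beat a reported label, and restricting the domain to $\{x_{i,j}\}$ loses no optimal solution. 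I would also remark on the degenerate tasks for which $\sum_i q_{i,c}\, g(r_{i,c})\, y_{i,j} = 0$ (no contributor), where $F_j$ is identically constant and any value is trivially optimal; these are exactly the tasks with no data, for which neither formula applies.
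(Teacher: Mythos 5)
Your proposal is correct and follows essentially the same route as the paper's proof: separate the objective per task, set the derivative of the weighted squared loss to zero for continuous data, and substitute the $0/1$ loss to turn minimization into weighted-vote maximization for categorical data. You are more thorough than the paper (explicit separability, the second-derivative check, justifying the restriction of the argmax to observed labels, and the no-contributor degenerate case), but these are refinements of the same argument rather than a different one.
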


\begin{proof}
For each task $j$, we first consider the case of continuous data, where $L(x_{i,j}, \hat{x}_j^*)=(x_{i,j} - \hat{x}_j^*)^2$. Then, the objective function can be formalized as follows
$$
f(\{\hat{x}_j^*\}) = \sum_{i \in \mathbb{N}_c} q_{i,c} \, g(r_{i,c}) \Big(  \sum_{j \in \mathbb{S}_c} y_{i,j} \, (x_{i,j} - \hat{x}_j^*)^2 + \alpha \epsilon_{i,c} \Big).
$$
We take the partial derivative of the function with respect to $\hat{x}_j^*$ and set it to zero, \ie,
$$
\frac{\partial f}{\partial \hat{x}_j^*} = 2 \sum_{i \in \mathbb{N}_c} q_{i,c} \, g(r_{i,c}) \, y_{i,j} (\hat{x}_j^* - x_{i,j})=0
$$
Solving the above equation, we get Equation \ref{hahaah1}.

For some task $j$, if its data $x_{i,j}$ is of categorical type, the loss function is 
\begin{equation}
L(x_{i,j}, \hat{x}_j^*) = \left\{
\begin{array}{ll}
0 & \textrm{if} \, x_{i,j} = \hat{x}_j^*, \\
1 & \textrm{otherwise}.
\end{array}
\right.
\end{equation}
Taking the loss function into Equation \ref{truth-update}, we can get the optimal solution to $\hat{x}_j^*$ shown in Equation \ref{hahaha2}.
\end{proof}

\subsubsection{\textbf{Reliability Estimation}}

After updating the estimation of the ground truth, we now fix the values of $\{\hat{x}_j^*\}$, and calculate the users' data qualities $\{q_{i,c}\}$ by solving the following optimization function. Intuitively, the users whose data are close to the ground truth will have high reliability estimations, and vice versa.
\begin{align}
\{q_{i,c}\} \leftarrow \underset{\{q_{i,c}\}}{\textrm{argmin}} \, & \sum_{i \in \mathbb{N}_c} q_{i,c} \, g(r_{i,c}) \Big( \sum_{j \in \mathbb{S}_c} y_{i,j} \, L(x_{i,j}, \hat{x}^*_j) + \alpha \epsilon_{i,c} \Big) \notag \\
\textrm{s.t.} \,\,\, & \sum_{i \in \mathbb{N}_c} \textrm{exp}(- q_{i,c} \, g(r_{i,c})) = 1.
\label{quality-update}
\end{align}

\begin{theorem}
Given fixed truth estimation $\{\hat{x}_j^*\}$, the problem in Equation \ref{quality-update} can be optimally solved. The optimal value of each $q_{i,c}, i \in \mathbb{N}_c$ is given by
\begin{equation}
q_{i,c} = \frac{1}{g(r_{i,c})} \textrm{log} \Big( \frac{\sum_{i \in \mathbb{N}_c} \big( \sum_{j \in \mathbb{S}_c} y_{i,j} L(x_{i,j}, \hat{x}^*_j) + \alpha \epsilon_{i,c} \big) }{\sum_{j \in \mathbb{S}_c} y_{i,j} L(x_{i,j}, \hat{x}^*_j) + \alpha \epsilon_{i,c} } \Big).
\label{theorem-2}
\end{equation}
\end{theorem}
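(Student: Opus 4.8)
The plan is to solve the constrained problem in Equation~\ref{quality-update} by the method of Lagrange multipliers, mirroring the treatment of the convex initialization problem in Equation~\ref{reliability-ini-1}. To keep the bookkeeping light, I would first note that once the truth estimates $\{\hat{x}^*_j\}$ are fixed, the per-user quantity $A_i := \sum_{j \in \mathbb{S}_c} y_{i,j}\, L(x_{i,j}, \hat{x}^*_j) + \alpha \epsilon_{i,c}$ is a nonnegative constant (each $L \ge 0$, each $y_{i,j} \in \{0,1\}$, $\alpha > 0$). The objective then collapses to the linear form $\sum_{i \in \mathbb{N}_c} q_{i,c}\, g(r_{i,c})\, A_i$, subject to $\sum_{i \in \mathbb{N}_c} \exp(-q_{i,c}\, g(r_{i,c})) = 1$, and I would use throughout that $g(r_{i,c}) = 1 - \log(r_{i,c}) \ge 1 > 0$ for every user retained with $r_{i,c} > 0$.

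I would form the Lagrangian $\mathcal{L} = \sum_{i} q_{i,c}\, g(r_{i,c})\, A_i + \lambda\big(\sum_{i} \exp(-q_{i,c}\, g(r_{i,c})) - 1\big)$ and set $\partial \mathcal{L}/\partial q_{i,c} = 0$. Because $g(r_{i,c})$ cancels as a common positive factor, stationarity reduces to $A_i = \lambda \exp(-q_{i,c}\, g(r_{i,c}))$, \ie\ $\exp(-q_{i,c}\, g(r_{i,c})) = A_i/\lambda$. Summing over $i \in \mathbb{N}_c$ and invoking the equality constraint fixes the multiplier as $\lambda = \sum_{k \in \mathbb{N}_c} A_k$. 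Substituting back and taking logarithms gives $q_{i,c} = \frac{1}{g(r_{i,c})} \log\big( \sum_{k \in \mathbb{N}_c} A_k \big/ A_i \big)$, which is exactly Equation~\ref{theorem-2} after expanding $A_i$.

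The main obstacle is not this algebra but certifying that the stationary point is the global minimizer rather than merely a KKT candidate, since the objective is linear while the constraint surface is curved. My plan is to expose the hidden convexity through the monotone change of variables $t_i = \exp(-q_{i,c}\, g(r_{i,c})) > 0$, under which $q_{i,c} = -\frac{1}{g(r_{i,c})}\log t_i$ and the problem becomes the minimization of $-\sum_{i} A_i \log t_i$ over the open simplex $\{t : \sum_i t_i = 1,\, t_i > 0\}$. Each $-\log t_i$ is convex and each $A_i \ge 0$, so the transformed objective is convex on a convex feasible set; hence the interior stationary point $t_i = A_i / \sum_k A_k$ delivered by the Lagrange condition is the global optimum, and mapping back through the monotone substitution transfers this optimality to $\{q_{i,c}\}$, with strict convexity (and uniqueness) whenever all $A_i > 0$. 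I would conclude by flagging the degenerate case $A_i = 0$ — a user whose data agrees with every relevant truth — which forces $q_{i,c} \to \infty$ and is interpreted as maximal (effectively unbounded) reliability, consistent with the model's intent.
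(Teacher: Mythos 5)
Your Lagrangian computation is identical to the paper's: both form the same Lagrangian, obtain the stationarity condition $A_i = \lambda \, \textrm{exp}(-q_{i,c}\, g(r_{i,c}))$ with $A_i = \sum_{j \in \mathbb{S}_c} y_{i,j} L(x_{i,j}, \hat{x}^*_j) + \alpha \epsilon_{i,c}$, sum over $i$ against the constraint to fix $\lambda = \sum_{k \in \mathbb{N}_c} A_k$, and back-substitute to reach Equation \ref{theorem-2}. Where you genuinely depart from the paper is in certifying optimality. The paper dismisses this step with the claim that ``the objective term is linear and the constraint set is convex,'' but the equality constraint $\sum_{i} \textrm{exp}(-q_{i,c}\, g(r_{i,c})) = 1$ defines a curved hypersurface rather than a convex set, so that sentence does not by itself show the Lagrange candidate is a global minimizer. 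Your substitution $t_i = \textrm{exp}(-q_{i,c}\, g(r_{i,c}))$, which converts the problem to minimizing $-\sum_i A_i \log t_i$ over the open probability simplex, is a clean and correct repair: the transformed problem is convex, the interior stationary point $t_i = A_i / \sum_k A_k$ is therefore globally optimal, and the monotone change of variables transfers this conclusion back to $\{q_{i,c}\}$. Your flagging of the degenerate case $A_i = 0$, where the closed form diverges and the infimum is not attained, is likewise a point the paper silently ignores (it matters in practice, since a user whose data exactly matches every estimated truth is not pathological). In short: same derivation, but your optimality argument is the more rigorous of the two.
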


\begin{proof}
The problem is convex, since the objective term is linear and the constraint set is convex. Therefore, we can apply the method of Lagrangian multipliers to solve the problem. The Lagrangian of Equation \ref{quality-update} is given as:
\begin{align}
f(\{q_{i,c}\}, \lambda) = & \sum_{i \in \mathbb{N}_c} q_{i,c} \, g(r_{i,c}) \Big( \sum_{j \in \mathbb{S}_c} y_{i,j} \, L(x_{i,j}, \hat{x}^*_j) + \alpha \epsilon_{i,c} \Big) \notag \\
& + \lambda \Big(\sum_{i \in \mathbb{N}_c} \textrm{exp}(- q_{i,c} \, g(r_{i,c})) - 1 \Big),
\label{category-truth}
\end{align}
where $\lambda$ is a Lagrange multiplier. Taking the partial derivative of Equation \ref{category-truth} with respect to $q_{i,c}$, we have
\begin{align}
\frac{\partial f}{\partial q_{i,c}} = \, & g(r_{i,c}) \Big( \sum_{j \in \mathbb{S}_c} y_{i,j} L(x_{i,j}, \hat{x}^*_j) + \alpha \epsilon_{i,c} \Big) \notag \\
\, & - \lambda \, g(r_{i,c}) \, \textrm{exp}(- q_{i,c} \, g(r_{i,c})).
\label{partial-der}
\end{align}

Letting Equation \ref{partial-der} to zero, we get
\begin{equation}
\sum_{j \in \mathbb{S}_c} y_{i,j} L(x_{i,j}, \hat{x}^*_j) + \alpha \epsilon_{i,c} = \lambda \, \textrm{exp}(- q_{i,c} \, g(r_{i,c})).
\label{lambda-quality}
\end{equation}

Summing both sides over $i$, we get
\begin{equation}
\sum_{i \in \mathbb{N}_c} \Big( \sum_{j \in \mathbb{S}_c} y_{i,j} L(x_{i,j}, \hat{x}^*_j) + \alpha \epsilon_{i,c} \Big) = \lambda \sum_{i \in \mathbb{N}_c} \textrm{exp}(- q_{i,c} \, g(r_{i,c})).
\end{equation}

Since $\sum_{i \in \mathbb{N}_c} \, \textrm{exp}(- q_{i,c} \, g(r_{i,c})) = 1$, we have
\begin{equation}
\lambda = \sum_{i \in \mathbb{N}_c} \Big( \sum_{j \in \mathbb{S}_c} y_{i,j} L(x_{i,j}, \hat{x}^*_j) + \alpha \epsilon_{i,c} \Big).
\label{get-lambda}
\end{equation}

Taking Equation \ref{get-lambda} into Equation \ref{lambda-quality}, we obtain a closed form solution of reliability $q_{i,c}$ shown in Equation \ref{theorem-2}.
\end{proof}

\begin{algorithm}[t] \small
\caption{User Reliability Estimation (Category $c$)}
\KwIn{Tasks $\mathbb{S}_c$ and $\mathbb{O}_c$, users $\mathbb{N}_c$, and data $\{x_{i,j}\}$}
\KwOut{Reliability $\{q_{i,c}\}$, and truth estimation $\{\hat{x}_j^*\}$}
\tcp{\small Parameter Initialization:}
\If{$i \in \mathbb{N}_c$}{
    \If{$i \in \mathbb{N}_c^o$}{
        $q_{i,c} \leftarrow \frac{1}{g(r_{i,c})} \textrm{log} \Big( \frac{|\mathbb{N}_c| \sum_{i \in \mathbb{N}_c^o} \sum_{o \in \mathbb{O}_c} y_{i,o} L(x_{i,o}, x^*_o)}{|\mathbb{N}_c^o| \sum_{o \in \mathbb{O}_c} y_{i,o} L(x_{i,o}, x^*_o)} \Big)$\;
    }
    \lElse{
        $q_{i,c} \leftarrow \frac{\textrm{log}(|\mathbb{N}_c|)}{g(r_{i,c})}$\;
    }
}
\lElse{
    $q_{i,c} \leftarrow \textrm{N/A}$\;
}
\While{not converged}{
    \tcp{\small Truth Update}
    \ForEach{task $j \in \mathbb{S}_c$}{
        \If{the task $j$ is of continuous data type}{
            $\hat{x}_j^* \leftarrow \frac{\sum_{i \in \mathbb{N}_c} q_{i,c} \, y_{i,j} \, x_{i,j} \, g(r_{i,c}) }{\sum_{i \in \mathbb{N}_c} q_{i,c} \, y_{i,j} \, g(r_{i,c}) }$\;
        }
        \If{the task $j$ is of categorical data type}{
            $\hat{x}_j^* \leftarrow \underset{x'_j \in \{x_{i,j}\}}{\textrm{argmax}} \sum_{i \in \mathbb{N}_c} q_{i,c} \, y_{i,j} \, g(r_{i,c}) \bm{1}(x_{i,j}, x'_j)$\;
        }
    }
    \tcp{\small Reliability Estimation}
    \ForEach{user $i \in \mathbb{N}_c$}{
        $q_{i,c} \leftarrow \frac{1}{g(r_{i,c})} \textrm{log} \Big( \frac{\sum_{i \in \mathbb{N}_c} \big( \sum_{j \in \mathbb{S}_c} y_{i,j} L(x_{i,j}, \hat{x}^*_j) + \alpha \epsilon_{i,c} \big) }{ \sum_{j \in \mathbb{S}_c} y_{i,j} L(x_{i,j}, \hat{x}^*_j) + \alpha \epsilon_{i,c}} \Big)$\;
    }
}
\tcp{Reliability Normalization}
$\forall i \in \mathbb{N}_c, q_{i,c} \leftarrow \frac{q_{i,c}}{\textrm{log}|\mathbb{N}_c|}$\;
\Return{\{$q_{i,c}$\} \textrm{and} $\{\hat{x}^*_j\}$}
\end{algorithm}

The pseudo-code of the algorithm is presented in Algorithm 1. We first initialize the users' reliability parameters, and then keep iterating the steps of truth update and reliability estimation until the change of the users' reliability is below a certain threshold. Due to the convexity of our problem and the ability to achieve the optimal solution for each step (Theorem 1 and Theorem 2), our algorithm is guaranteed to converge to some local optimum, according to the proposition of the block coordinate descent \cite{bertsekas1999nonlinear}. Further improvements can be made to find a 2-approximation of the global optimum within nearly linear time \cite{ding2016finding}.

\subsubsection{\textbf{Reliability Normalization}}

Until now, we have obtained the estimations of the users' reliability levels and unknown ground truths. However, there is a problem in our model, \ie, each user $i$'s reliability estimations for different categories are in different scales. From the regularization term $\sum_{i \in \mathbb{N}_c} \textrm{exp}(-q_{i,c} g(r_{i,c})) = 1$, we can see that the average value for $q_{i,c} g(r_{i,c})$ is $\textrm{log}|\mathbb{N}_c|$, which is proportional to the number of data contributors for category $c$. This means that a user is likely to receive a higher reliability score when she is among a large number of data contributors, which is not reasonable. In order to guarantee each user's reliability estimations for different tasks are in the same scale, we normalize each user $i$'s reliability estimation $q_{i,c}$ into $\frac{q_{i,c}}{\textrm{log}|\mathbb{N}_c|}$.

\subsection{Estimating Missing Entries: A Latent Factor Model}
\label{subsection-missing}

From the above subsection, we have obtained each user's reliability information over the task categories that she has contributed data to. However, we observe that if a user $i$ did not contribute data to some category $c$ (\ie, $i \notin \mathbb{N}_c$), then Algorithm 1 is not able to estimate the user $i$'s reliability over $c$. In this part, we propose a matrix factorization method to address this problem.

We use $Q$ to denote the users' reliability matrix, where each entry $q_{i,c}$ is the user $i$'s reliability for task category $c$. We map both users and task categories to a joint latent factor space of dimensionality $k$. Specifically, we assume that each user $i$ is associated with a vector $\bm{w}_i \in \mathbb{R}^k$, and each category is associated with $\bm{\theta}_c \in \mathbb{R}^k$. The vector $\bm{w}_i=[w_{i,1}, w_{i,2}, \ldots, w_{i,k}]^T$ can be interpreted as the user $i$'s capabilities in $k$ different dimensions, and the vector $\bm{\theta}_c =[\theta_{c,1},\theta_{c,2},\ldots, \theta_{c,k}]^T$ can be seen as the weight of each capability needed by the category $c$. Then, each user $i$'s reliability for each category $c$ can be calculated as $q_{i,c} = \bm {w}_i^T \bm{\theta}_c$.

To estimate the missing entries in matrix $Q$, we tend to calculate each user $i$'s latent vector $\bm{w}_i$ and each category's latent vector $\bm{\theta}_c$. Let $\mathcal{W}$ and $\Theta$ denote the sets of users' and categories' latent vectors, respectively. Then, the objective function can be formalized as follows.
\begin{equation}
\underset{\mathcal{W},\Theta}{\textrm{min}} \,\, \frac{1}{2} \sum_{c=1}^C \sum_{i=1}^N z_{i,c} (q_{i,c} - \bm{w}_i^T \bm{\theta}_c)^2
\label{profile}
\end{equation}
where $z_{i,c}$ indicates if user $i$ has contributed data to category $c$ (1 means yes, and 0 otherwise). To prevent over-fitting, we add regularization terms in Equation \ref{profile}.
\begin{equation}\small
\underset{\mathcal{W},\Theta}{\textrm{min}} \,\,
\frac{1}{2} \sum_{c=1}^C \sum_{i=1}^N a_{i,c} (q_{i,c} - \bm{w}_i^T \bm{\theta}_c)^2 + \frac{\lambda_1}{2} \sum_{i=1}^N \Vert \bm{w}_i \Vert^2 + \frac{\lambda_2}{2} \sum_{c=1}^C \Vert \bm{\theta}_c \Vert^2,
\label{profile-reg}
\end{equation}
where $\Vert \bm{w}_i \Vert^2 = \sum_{t=1}^k w_{i,t}^2$ and $\Vert \bm{\theta}_c \Vert^2 = \sum_{t=1}^k \theta_{c,t}^2$. $\lambda_1$ and $\lambda_2$ are parameters controlling the weights of regularization terms.

\begin{algorithm}[t] \small
\caption{Unknown Reliability Estimation}
\KwIn{Users reliability matrix $Q$}
\KwOut{Unknown reliability parameters $\{q_{i,c}| z_{i,c}=0\}$}
Initialize $\{\bm{w}_i\}$ and $\{\bm{\theta}_c\}$ to small random values\;
\While{not converged}{
    \ForEach{i=1,\ldots,N, c=1,\ldots,C}{
        $w_{i,t} \leftarrow w_{i,t} - \beta \Big( \sum_{c=1}^C z_{i,c} (q_{i,c} - \bm{w}_i^T \bm{\theta}_c) + \lambda_1 w_{i,t} \Big)$,
        $\theta_{c,t} \leftarrow \theta_{c,t} - \beta \Big( \sum_{i=1}^N z_{i,c} (q_{i,c} - \bm{w}_i^T \bm{\theta}_c) + \lambda_2 \theta_{c,t} \Big)$\;
    }
}
\ForEach{$q_{i,c} = \textrm{N/A}$}{
    $q_{i,c} \leftarrow \bm{w}^T_i \cdot \bm{\theta}_c$
}
\Return{$\{q_{i,c}| z_{i,c}=0\}$}
\end{algorithm}

We propose to use a simple gradient descent method to solve the above problem. The pseudo-code is presented in Algorithm 2. We first initialize $\{w_{i,t}\}$ and $\{\theta_{c,t}\}$ to small random values. After that, we apply gradient descent algorithm, \ie, for every $i$ and $t$, we update $\{w_{i,t}\}$ and $\{\theta_{c,t}\}$ using the following rules
\begin{align}
& w_{i,t} \leftarrow w_{i,t} - \beta \Big( \sum_{c=1}^C z_{i,c} (q_{i,c} - \bm{w}_i^T \bm{\theta}_c) + \lambda_1 w_{i,t} \Big), \\
& \theta_{c,t} \leftarrow \theta_{c,t} - \beta \Big( \sum_{i=1}^N z_{i,c} (q_{i,c} - \bm{w}_i^T \bm{\theta}_c) + \lambda_2 \theta_{c,t} \Big),
\end{align}
where $\beta$ is the learning rate. Finally, we can predict a user $i$'s reliability for a task category $c$ even if the user $i$ did not provide any data to $c$, \ie, for $i \notin \mathbb{N}_c$, $q_{i,c} \leftarrow \bm{w}^T_i \bm{\theta}_c$.

\section{Evaluation}
\label{evaluation}
In this section, we implement and evaluate the performance of our proposed methods. We first conduct a real-world crowdsensing experiment, and then simulate a large-scale crowdsensing scenario to further examine the performance of our methods.

\subsection{Experiment Setup}

We recruit 10 users (8 males and 2 females) to participate in our experiment. In the experiment, we manually create 123 sensing tasks for 9 different categories. An overview of the tasks is presented in Table \ref{task-overview}. The tasks within the same category focus on the same sensing target (such as noise, traffic, or weather), but with different attributes, including time, locations, and payments. Each task category has a data type requirement. For instance, noise monitoring requires continuous data type, while weather monitoring requires categorical data type. The entire task corpus is shown to the users through the browsers on the users' smartphones. Each user can browse through these tasks, and choose their interested tasks to work on. The ground truth of each task is monitored by the authors themselves, and unavailable to the users. We collect the users' sensing data, as well as their operation records, including each user's task browsing history, task selection history, and task completion history.

According to our collected data, each user contributes data to about 60\% of the tasks in average. The parameter $\alpha$ used in our semi-supervised learning model is set to 1. And for each task category, we use the ground truths of 10\% of the tasks. The parameters $k$, $\lambda1$ and $\lambda_2$ used in our matrix factorization method are set to 3, 5 and 5, respectively.


\begin{table}[!t] \centering \small
\caption{Sensing Task Overview}
\begin{tabular}{llll}
\hline
Category & Monitored target & \# of tasks & Data type \\
\hline
C1 & noise & 8 & continuous \\
C2 & air pollution & 9 & categorical \\
C3 & traffic congestion & 11 & categorical \\
C4 & human flow & 20 & categorical \\
C5 & temperature & 20 & continuous \\
C6 & weather & 9 & categorical \\
C7 & price & 12 & continuous \\
C8 & question & 17 & categorical \\
C9 & accident & 17 & categorical \\
\hline
\end{tabular}
\label{task-overview}
\vspace{-0.2cm}
\end{table}

\subsection{Experiment Results on User Reliability Profiling}

\begin{figure}[!t]
\centering
\subfigure[Continuous Data]
{\includegraphics[width=0.24\textwidth]{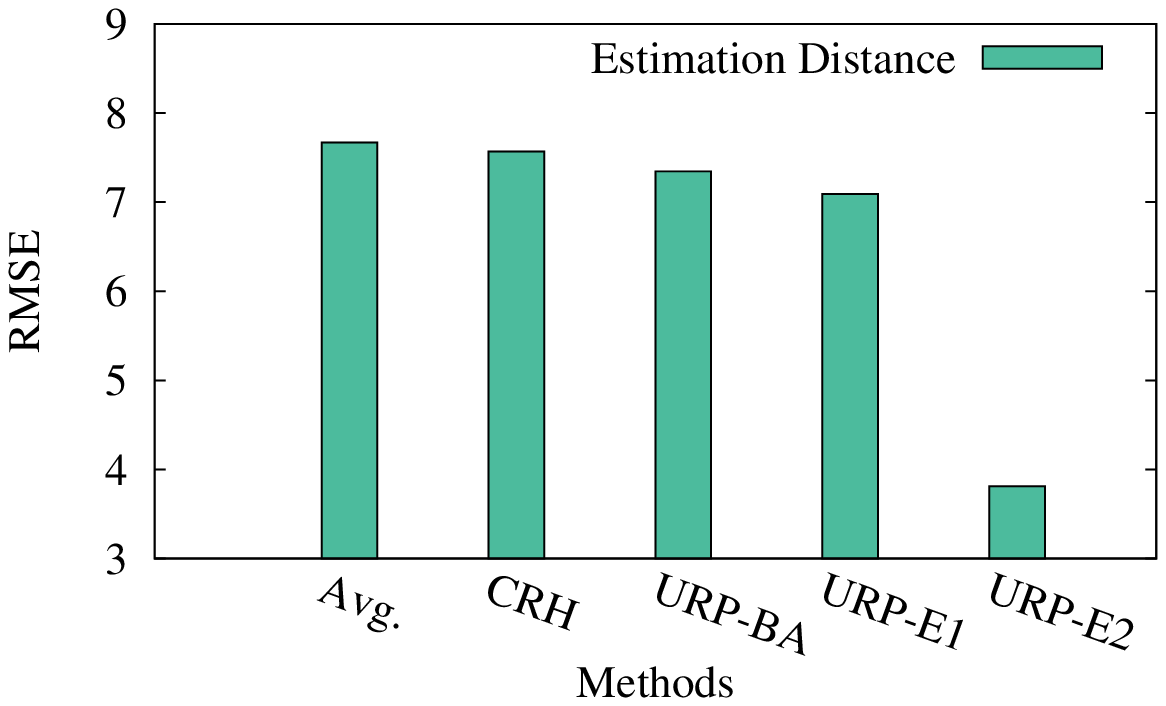}}
\subfigure[Categorical Data]
{\includegraphics[width=0.24\textwidth]{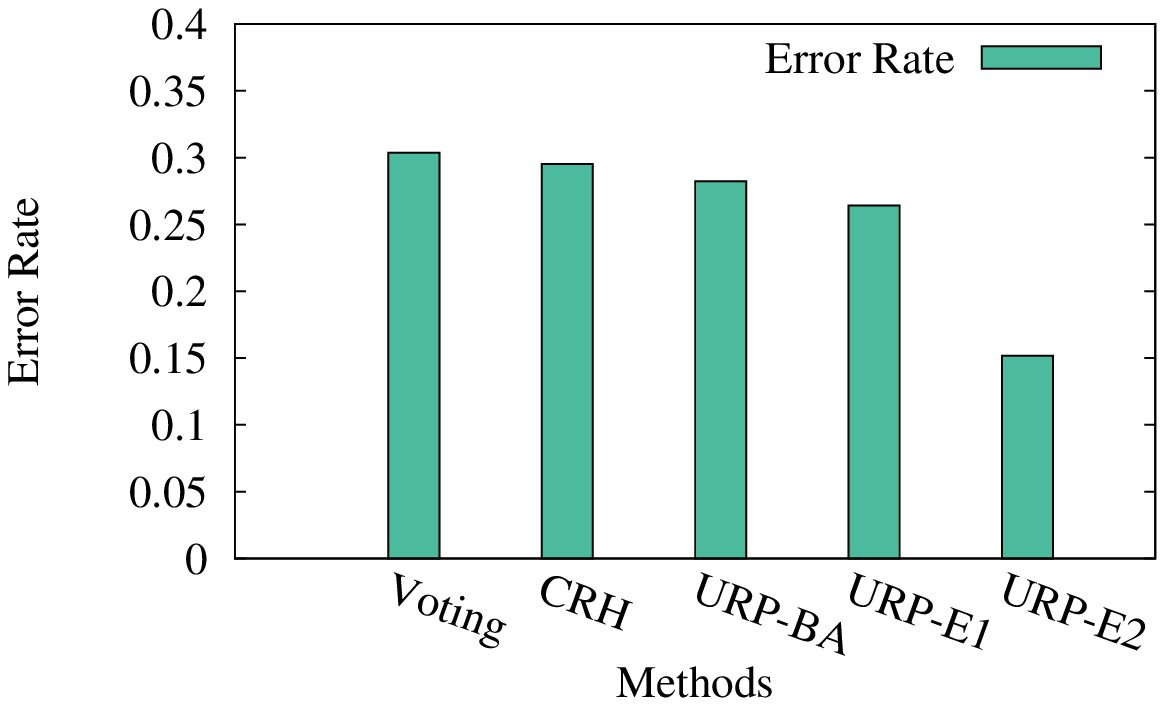}}
\vspace{-0.4cm}
\caption{Performance Comparison on Estimation Accuracay}
\label{experiment-accuracy}
\vspace{-0.4cm}
\end{figure}

\begin{figure}[t]
\centering
\subfigure[Reliability]
{\includegraphics[width=0.18\textwidth]{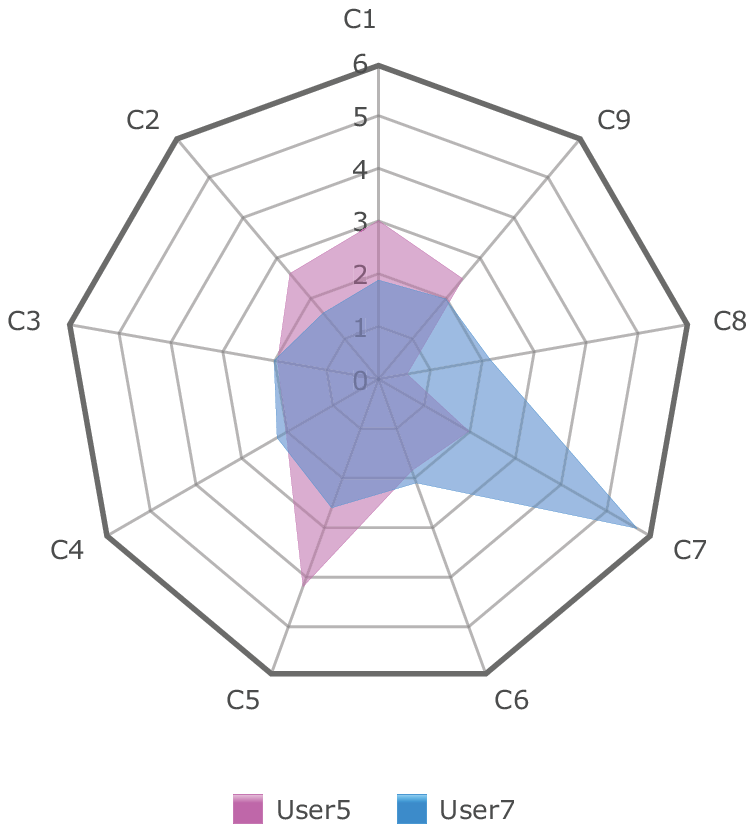}}
\hspace{0.2cm}
\subfigure[Preference]
{\includegraphics[width=0.18\textwidth]{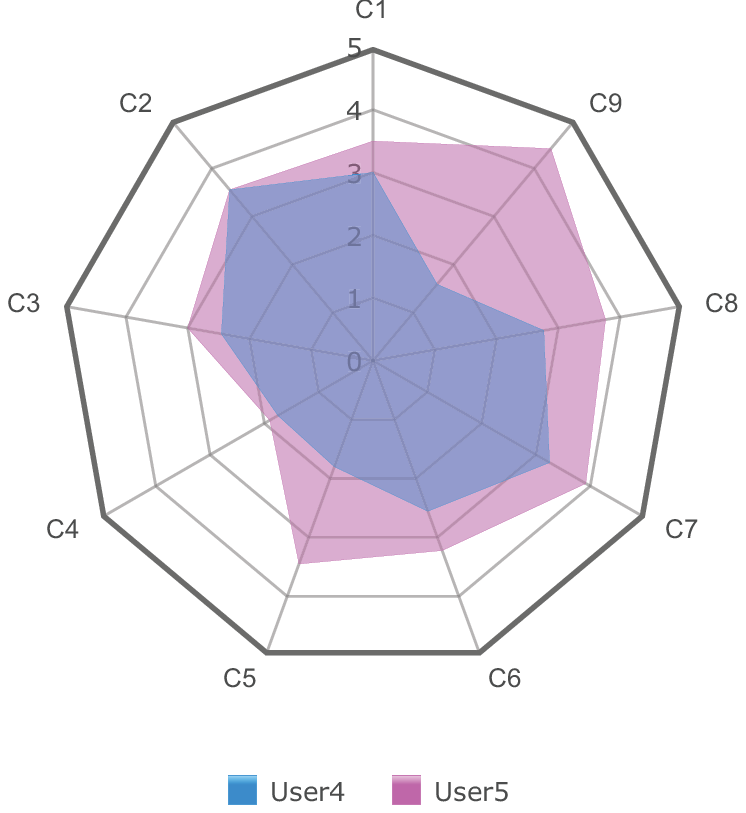}}
\vspace{-0.2cm}
\caption{User Profiling}
\label{user-profiling}
\vspace{-0.5cm}
\end{figure}

In the experiment, we evaluate the performance of our proposed user profiling algorithm. To differentiate the notations, we use ``URP-BA'' to denote the basic version shown in \ref{urp-basic}, and ``URP-E1'' and ``URP-E2'' to denote the first enhancement and the second enhancement, respectively. We compare our algorithms with two benchmarks. One is a heuristic method that treats each user's data equally, \ie, simple average (``Avg.'') for continuous data and majority voting (``Voting'') for categorical data. The other benchmark is a general truth discovery framework, called ``CRH'' \cite{li2014resolving}, which uses a single parameter to model each user's reliability level. We adopt the following two metrics to measure the performance of the algorithms.
\begin{itemize}
\item RMSE: For continuous data, we use Root Mean Square Error (RMSE) to measure the distance between the estimation result and the ground truth. Mathematically, the RMSE is defined as $\sqrt{\sum_{j \in \mathbb{S}} (x_j^* - \hat{x}^*_j)^2/|M|}$.
\item Error Rate: For categorical data, we use Error Rate to quantify the performance of an algorithm. The Error Rate of an algorithm is defined as the percentage of the tasks to which the algorithm's estimations are different from the ground truth, \ie, $1 - \frac{\sum_{j \in \mathbb{S}} \bm{1}(x_j^*,\hat{x}^*_j)}{M}$.
\end{itemize}

Fig. \ref{experiment-accuracy} presents the performance comparison between our algorithms and the benchmarks. We can see that for either data type, the truth discovery-based algorithms can achieve higher estimation accuracy than the simple average or majority voting, indicating the effectiveness of truth discovery algorithms. However, the performance of Avg./Voting, CRH, URP-BA, and URP-E1 tends to be similar. The main reason is that under the crowdsensing scenarios, these usually exist many tasks to which the majority of the users' data are inaccurate, thus the traditional unsupervised learning models may have trouble identifying the users' true reliability levels. In this case, as we can see that URP-E2 has superior performance to the other four algorithms, incorporating even a small number of ground truths can greatly improve the estimation accuracy.

\subsection{Experiment Results on Personalized Task Matching}

Besides profiling the users' reliability, we also profile each user's preference towards each task using the methods proposed in Section \ref{sec-pre-profiling}. In Fig. \ref{user-profiling}(a) and Fig. \ref{user-profiling}(b), we present the reliability profiles and preference profiles of two representative users respectively, where the user's preference towards a task category is calculated as the user's average preference score of the tasks in the category. We normalize the users' preferences to [0,5] for better graphical presentation.

To evaluate the performance of our personalized task recommender system, we provide each user a list of 20 recommended tasks, and ask each user to choose their interested tasks. Recall that our personalized task recommender system recommends tasks to the users based on both the users' reliability and preference. Specifically, for each user and task pair $(i,j)$, we calculate a recommendation score $Score(i,j) = \gamma p_{i,j} + (1-\gamma) q_{i,j}$. Suppose task $j$ belongs to category $c$, then we set $p_{i,j}$ to $p_{i,c}$. We use $\gamma = 0.4$ and $\eta = 0.5$ in our experiment. After that, our system recommends each user 20 tasks with the highest recommendation scores. Three benchmarks are adopted, including random recommendation, preference-only recommendation, and reliability-only recommendation. Random task recommendation strategy provides each user a list of 20 randomly chosen tasks, while the preference- or reliability-only recommendation strategies provide each user 20 tasks with highest preference or reliability scores, respectively.

\begin{figure}[t]
\centering
\subfigure[Acceptance Ratio]
{\includegraphics[width=0.24\textwidth]{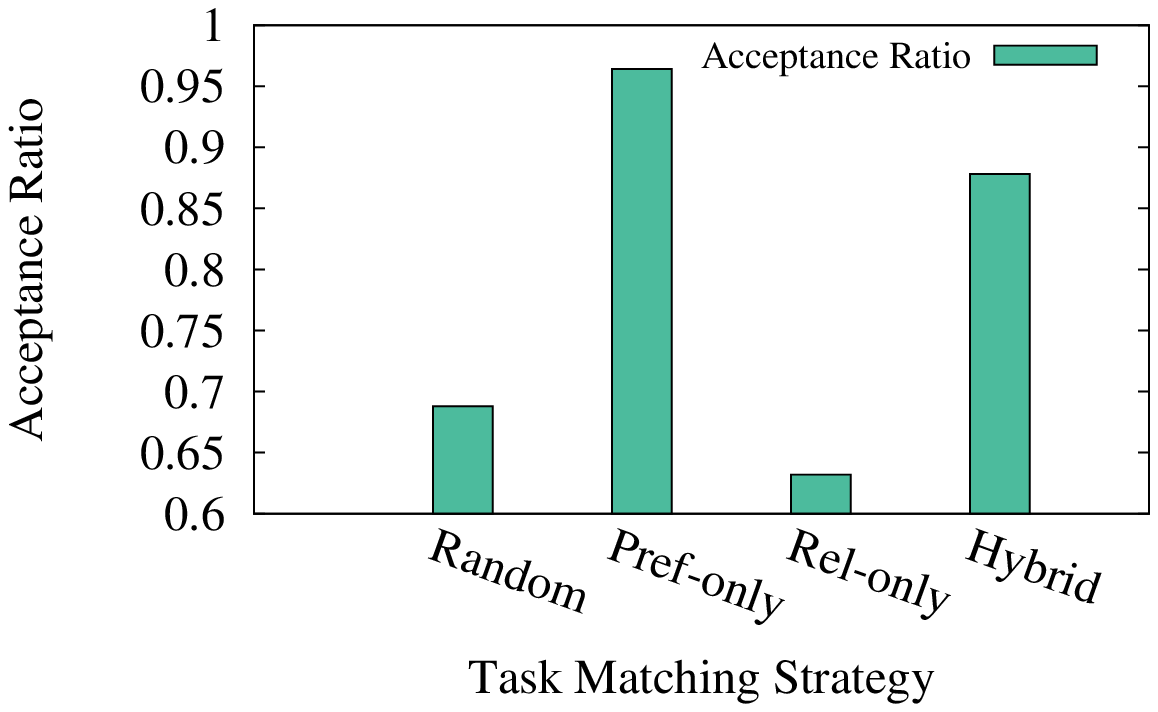}}
\subfigure[Estimation Accuracy]
{\includegraphics[width=0.24\textwidth]{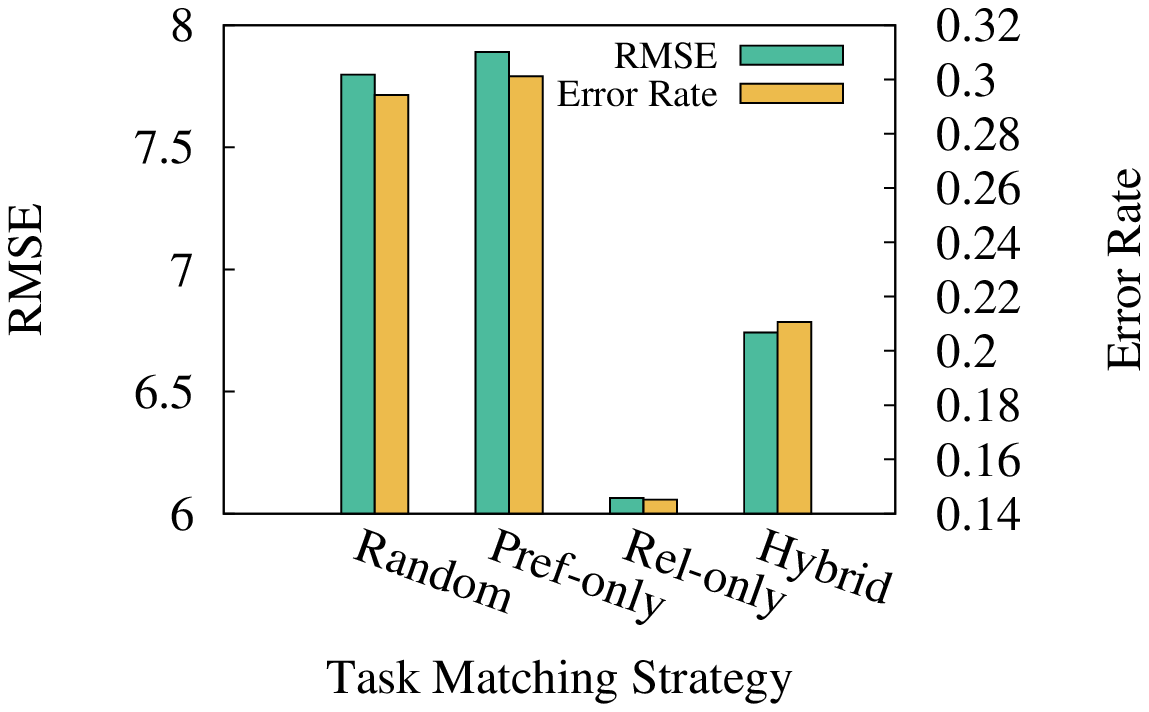}}
\vspace{-0.3cm}
\caption{Comparison on Different Task Matching Strategies}
\label{task-matching-cmp}
\vspace{-0.4cm}
\end{figure}

%

\begin{figure*}[!ht]
\centering
\subfigure[Setting1]
{\includegraphics[width=0.3\textwidth]{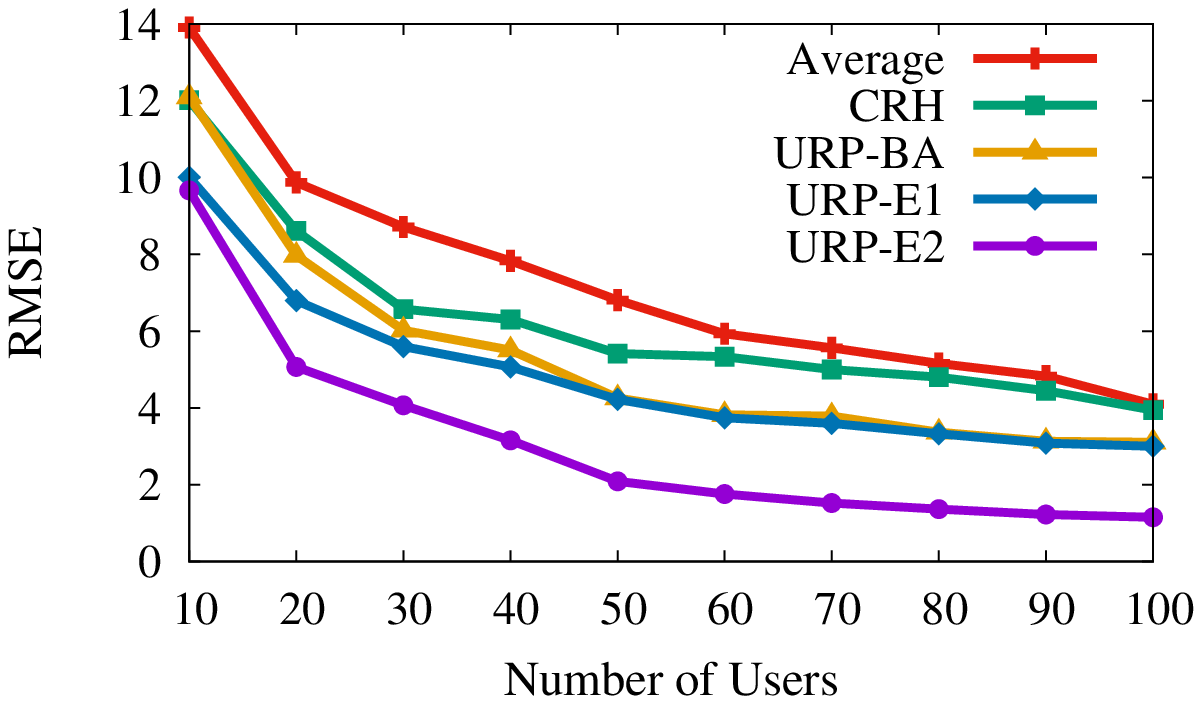}}
\subfigure[Setting2]
{\includegraphics[width=0.3\textwidth]{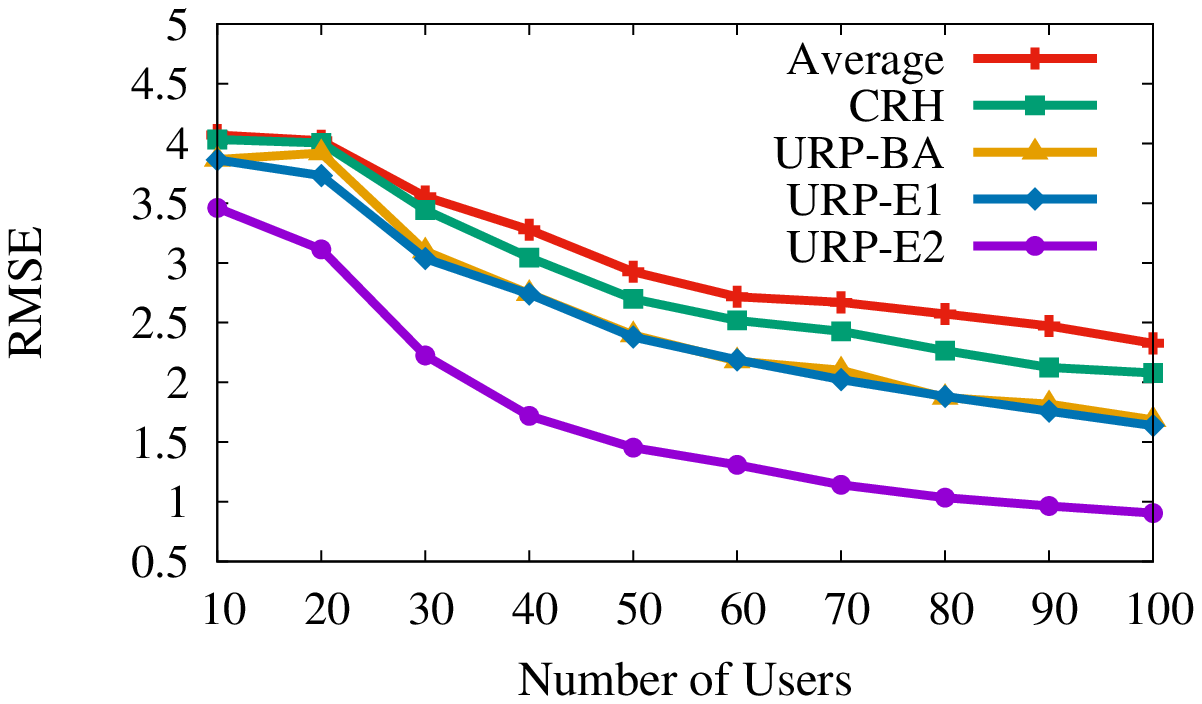}}
\subfigure[Setting3]
{\includegraphics[width=0.3\textwidth]{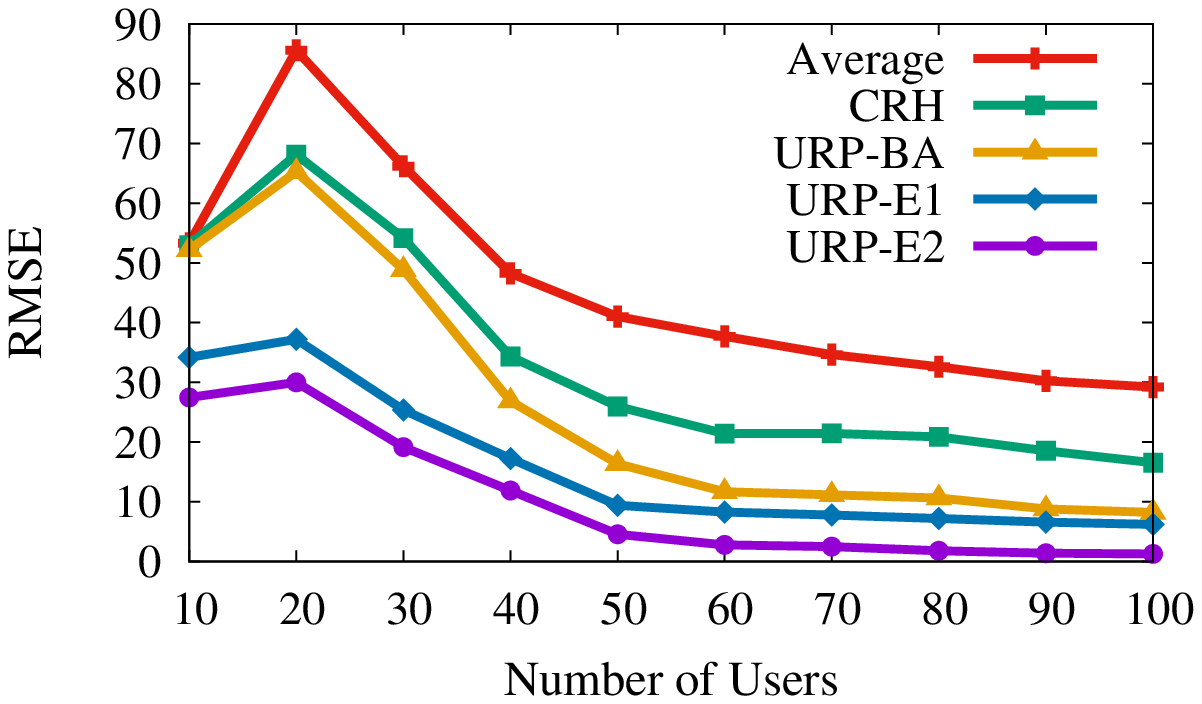}}
\vspace{-0.3cm}
\caption{Comparisons on estimation accuracy with varying number of users}
\label{varying-num-users}
\vspace{-0.5cm}
\end{figure*}

\begin{figure*}[!ht]
\centering
\subfigure[Setting1]
{\includegraphics[width=0.3\textwidth]{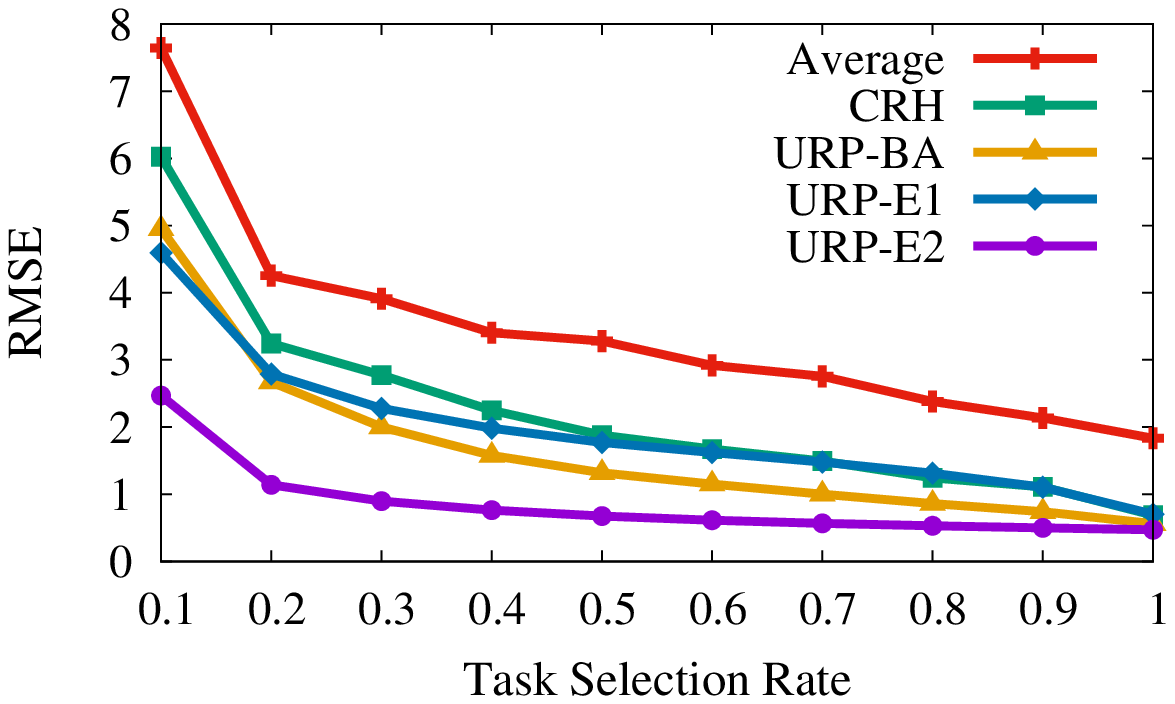}}
\subfigure[Setting2]
{\includegraphics[width=0.3\textwidth]{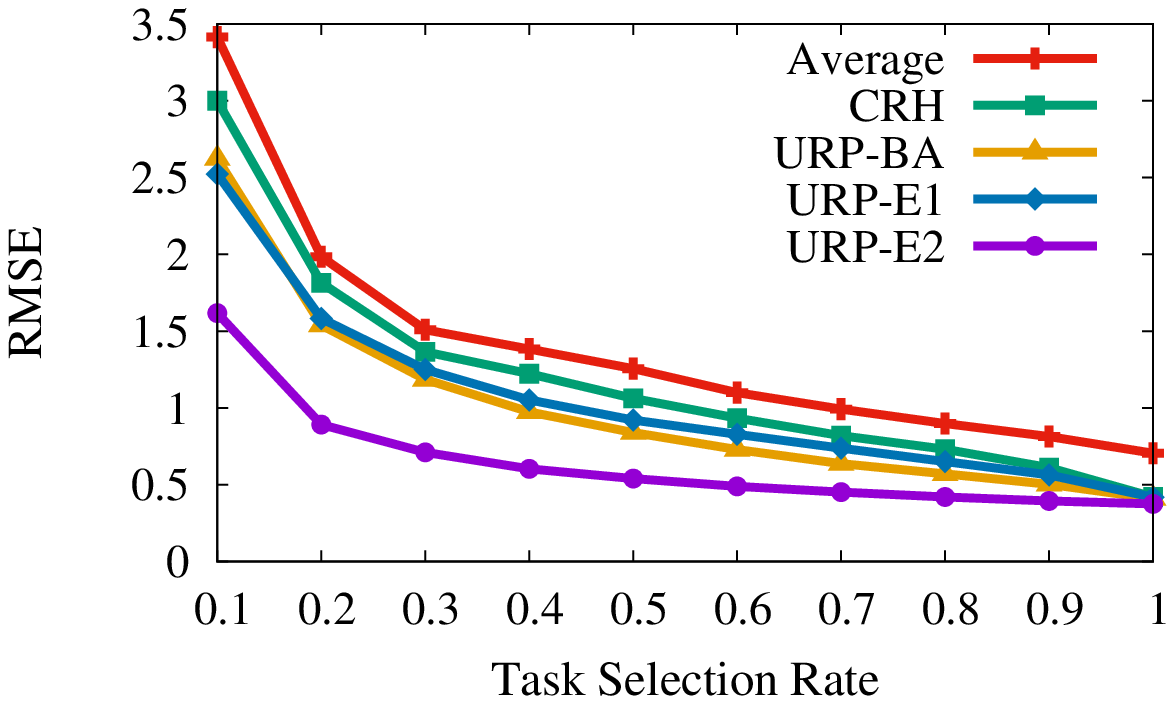}}
\subfigure[Setting3]
{\includegraphics[width=0.3\textwidth]{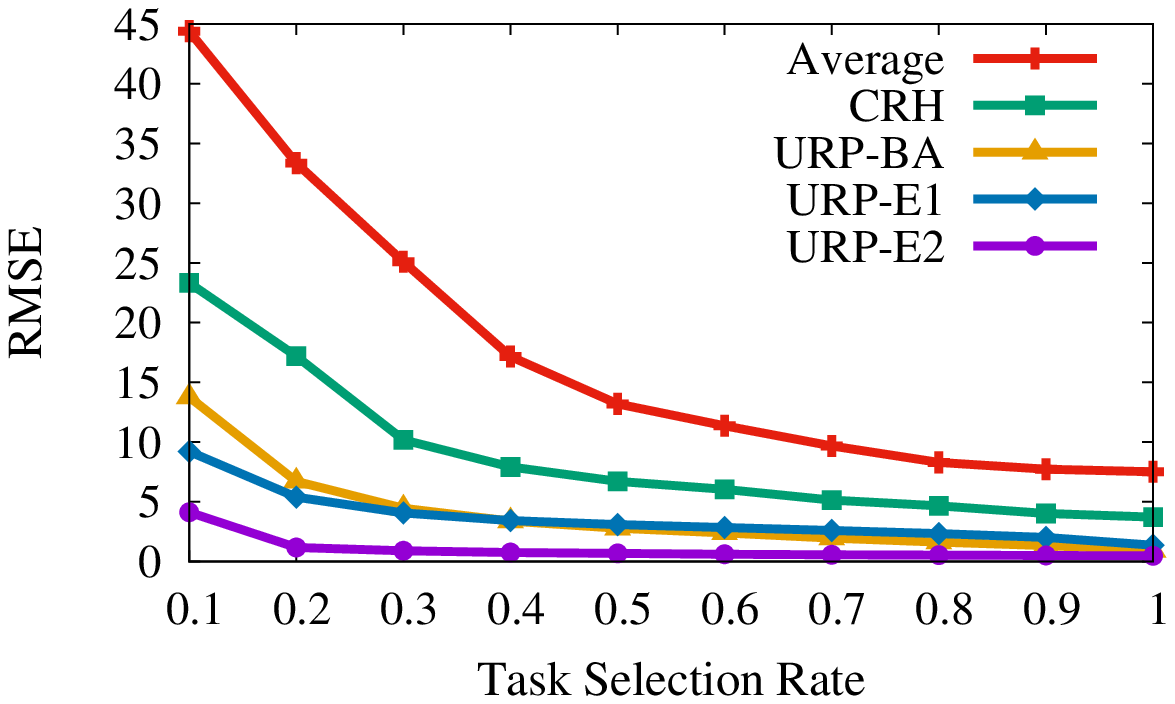}}
\vspace{-0.3cm}
\caption{Comparisons on estimation accuracy with varying users' task selection rate}
\label{varying-selection-rate}
\vspace{-0.5cm}
\end{figure*}

The performance of task matching strategies is measured on two different perspectives, \ie, task acceptance ratio and estimation accuracy. The task acceptance ratio is defined as the percentage of the recommended tasks that the users have selected, and the estimation accuracy is measured using RMSE or Error Rate depending on the data types of the tasks. The performance comparison of different task matching strategies is presented in Fig. \ref{task-matching-cmp}. We can see that the preference-only strategy has the highest task acceptance ratio, while the reliability-only strategy outputs the most accurate estimation results. That is because these two strategies match tasks to the users with the tendency of facilitating the match of one certain perspective. Comparing with other task matching strategies, we can see that our proposed hybrid recommendation strategy can achieve a good balance between the acceptance ratio and the estimation accuracy.

\subsection{Evaluations on A Large-Scale Scenario}

In this subsection, we examine the performance of our user profiling algorithm on a large-scale crowdsensing scenario.

In our simulation, there are 100 users and 1000 tasks. These tasks are randomly distributed among 20 categories. Each user's task selection rate is set to 10\%, \ie, each user contributes data to each task with \% probability. The ground truth of each task is randomly distributed within [30,100]. For each user $i$, if she contributes data to the task $j$ of category $c$, then her data $x_{i,j}$ is generated based on a Gaussian distribution with the mean $\hat{x}^*_j$ and variance $\frac{2}{q_{i,c}}$, \ie, $x_{i,j} \sim \mathcal{N}(\hat{x}^*_j,\frac{2}{q_{i,c}})$. In URP-E2, we randomly choose 1\% of tasks, and incorporate their ground truths in the user reliability profiling process.


In the simulation, we classify the users into three groups: reliable users, normal users, and unreliable users, where the users' reliability distributions in these three groups are $\mathcal{N}(0.75,0.1)$, $\mathcal{N}(0.5,0.1)$, and $\mathcal{N}(0.25,0.1)$, respectively. We consider three different settings. In the first setting, the users are classified into the three groups randomly. In the second setting, each user has 60\% probability of being classified into reliable users, 30\% normal users, and 10\% unreliable users, while in the third setting, each user has 10\% being reliable, 30\% being normal, and 60\% being unreliable. We assume that for each user, if her reliability for certain task is below 0.2, then the user will have 50\% probability of failing the task.

Fig. \ref{varying-num-users} presents the estimation accuracy of different algorithms with a varying number of the users. The number of users varies from 10 to 100 with the increment of 10. We can see that the simple average has the worst estimation accuracy, while URP-E2 achieves the lowest RMSE in all the three settings. In \ref{varying-num-users}(c), we observe that the RMSE first grows as the number of users increases, and then decrease when the number of users is getting larger. This is because that when the number of users is small, slightly increasing the number of users, especially unreliable users, may bring extra errors to the estimation results. As the number of users increases, the platform can access to more information, and thus can reduce the estimation errors.

Fig. \ref{varying-selection-rate} shows the estimation accuracy of different algorithms with varying task selection rate. We increase the task selection rate from 0.1 to 1 with the increment of 0.1. It can be seen that our proposed user profiling algorithm achieves the lowest RMSE, indicating the effectiveness of our algorithm. Besides, we can observe that the RMSE decreases as the task selection rate increases. This is because that increasing the task selection rate usually means having more data, \st, the platform can identify the users' reliability levels more accurately. A similar phenomenon was also observed in \cite{li2014confidence}.

We also examine the effect of the number of incorporated ground truths on the estimation accuracy. The results are shown in Fig. \ref{training-ratio}. We can see that having more truth can improve our estimation results. Besides, comparing the different settings, we can see that Setting 2 achieves the best estimation accuracy, since most users in Setting 2 are reliable.

\begin{figure}[!t]
\centering
\includegraphics[width=6cm]{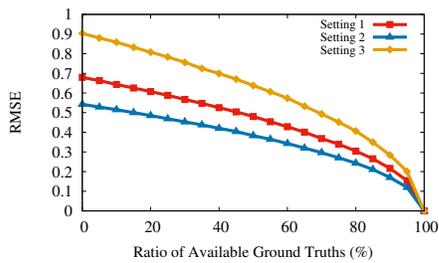}
\vspace{-0.2cm}
\caption{The Effect of Available Truth on Estimation Accuracy}
\vspace{-0.4cm}
\label{training-ratio}
\end{figure}

\section{Discussion}
\label{discussion}

In this section, we discuss several practical issues and potential extensions of our proposed personalized task recommendation methods.

\textbf{New User Problem:} For a user that is new to our system, we may have very little information (browsing history and data contribution) on the user. In this case, it is difficult to get an accurate preference or reliability profile of the user. Fortunately, this new user problem has been widely studied in traditional recommender system literature, \eg, \cite{rashid2002getting, yu2004probabilistic}, where their ideas can also be applied in our problem scenario. For example, we can recommend the most informative tasks to the new user, so as to gain knowledge of the user's preference and reliability. Heuristics may include random recommendation, recommending most popular tasks, and recommending tasks among different categories.

\textbf{Content-Based Reliability Prediction}: In this work, we propose a matrix factorization method to predict the missing entries in each user's reliability estimation. This method is able to capture inherent subtle characteristics of the users' reliability without the need of extracting features of the users and the tasks. However, one drawback of the approach is that we may not be able to interpret what factors influence the users' reliability. In situations where interpretability matters, using content-based methods such as building a classification model to predict the users' reliability can be a good alternative.

\textbf{General Reliability Profiling Problem}: In our reliability profiling model, we assume that each task only belongs to one category and tasks in different categories are independent. Sometimes, these assumptions may not hold. In these cases, a general reliability profiling problem can be considered, \ie, given the users' contributed data, we aim to estimate the unknown ground truths $\{x_j^* | j\in \mathbb{S}\}$, each user's reliability vector $\boldsymbol{q}_i \in \mathbb{R}^C$, and each task's weight vector $\boldsymbol{v}_j \in \mathbb{R}^C$, where $C$ is a hyperparameter determining the dimension of the vectors.

Having estimated $\boldsymbol{q}_{i}$ and $\boldsymbol{v}_j$, each user $i$'s reliability for each task $j$ can be calculated as $q_{i,j} = \boldsymbol{q}_i^T \cdot \boldsymbol{v}_j$. We can see that the model we proposed in Section \ref{urp-model} is a simplified version of the general reliability profiling problem, where we specify $C$ to be the number of task categories, and each entry $q_{i,c}$ of the task vector is 1 if task $j$ belongs to category $c$, and zero otherwise. Note that in the general reliability profiling problem, we now have three sets of unknown variables that need to be estimated, which is more difficult. We tend to leave this problem to our future work.

\textbf{Context-based User Profiling}: We observe that a user's preference and reliability can be dependent on the contextual situation of the user. For example, a user's preference may be dependent on time (\eg, time of a day, or season of the year) \cite{adomavicius2011context}. Also, as pointed out in \cite{liu2017context}, a user's reliability could also be influenced by her activity (\eg, sitting, walking, or running) and her surrounding environment (\eg, home, office, shopping mall). Thus, we think the problem of profiling the users' preference and reliability in a context-aware situation could also be an interesting future work.

\textbf{Other Task Matching Models}: Though this work focuses on the user-centric model, the estimated preference and reliability parameters of the users can be readily integrated into platform-centric model, allowing the platform to make centralized decisions based on them, such as selecting most reliable users to perform a sensing task \cite{he2015high}, determining the payments of the users based on their reliability levels \cite{peng2015pay,han2016posted,yang2017designing}, or integrating the preference or reliability information into the design of incentive mechanisms \cite{jin2015quality}.

\section{Related Work}
\label{related}
\textbf{Crowdsensing Applications}: The concept of mobile crowdsensing has attracted broad attention from both industry and academia, and has been applied in various application domains, including but not limited to environment monitoring \cite{mun2009peir,lu2009soundsense,gao2016mosaic}, indoor localization \cite{azizyan2009surroundsense,rai2012zee}, indoor floorplan construction \cite{alzantot2012crowdinside,gao2014jigsaw}, traffic and navigation \cite{zhou2012long,shu2015last}, and image sensing \cite{wang2014smartphoto}.

\textbf{Platform-Centric Crowdsensing}: Many researchers have studied the user selection problem in mobile crowdsensing. They usually modelled the problem from a game-theoretical perspective like \cite{yang2012crowdsourcing}. For example, Zhao \et \cite{zhao2014crowdsource} considered the problem of budget feasible mechanism design for crowdsensing, and proposed mechanisms for both offline and online scenarios. Karaliopoulos \et \cite{karaliopoulos2015user} addressed the user recruitment problem for opportunistic network scenario, and proposed two efficient algorithms to maximize the overall location coverage. Zhang \et \cite{zhang2016incentive} proposed a double auction mechanism for proximity-based mobile crowdsensing. In these works, the platform's main concern was to determine the set of selected users and their corresponding payments so as to maximize a certain optimization metric. They only considered the heterogeneity of the users and assumed that the tasks are of no differences. Some researches have also studied the task assignment problem in mobile crowdsensing. For example, He \et \cite{he2014toward} studied the optimal task allocation problem for location-dependent crowdsensing. Zhao \et \cite{zhao2014fair} considered the task allocation problem in crowdsensing with the objective of optimizing the energy efficiency of smartphones. Cheung \et \cite{cheung2015distributed} considered the distributed task selection problem for time-sensitive and location-dependent tasks. However, these works were all based on a platform-centric model. Besides, none of these work took the issue of data quality into consideration.

\textbf{User-Centric Crowdsensing}: Few researches have studied the user-centric model in crowdsensing. Karaliopoulos \et \cite{karaliopoulos2016first} adopted logistic regression techniques to estimate a user's probability of accepting a task, and tend to match tasks to users based on the information. However, they did not consider the users' data quality or reliability in performing the sensing tasks. Although Jin \et \cite{jin2015quality} and Han \et \cite{han2016posted} considered the problem of quality-aware task matching, they were based on the platform-centric model, and were unable to recommend personalized tasks for the users. In contrast, our work considers a user-centric task matching model by taking both the users' preference and data quality into consideration. A preliminary version of this work appears at INFOCOM 2018 \cite{yang2018towards}, while this work has substantial revision over the previous one including additional technical materials and discussions.

\textbf{Truth Discovery}: The problem of truth discovery has been widely studied to handle the situation where data collected from multiple sources tend to be conflicting and the ground truths are unknown \cite{yin2008truth}. Wang \et \cite{wang2012truth} considered the problem of truth detection in social sensing based on EM algorithm. Wang \et \cite{wang2013recursive} proposed a truth discovery algorithm to handle streaming data. Ouyang \et \cite{ouyang2014truth} proposed a truth discovery method to detect spatial events based on a graphical model. Su \et \cite{su2014generalized} designed a generalized decision aggregation framework for distributed sensing scenarios. Wang \et \cite{wang2014towards} studied the truth discovery problem in cyber-physical systems. Wang \et \cite{wang2015scalable} further exploited the problem of truth discovery for interdependent phenomena in social sensing. Meng \et \cite{meng2015truth} exploited the spatial correlations to improve the estimation accuracy. CRH \cite{li2014resolving} is a general truth discovery framework that can handle both continuous and categorical data. Li \et \cite{li2014confidence} considered truth discovery problem for long-tail data, and proposed a confidence-aware approach. Peng \et \cite{peng2015pay} propose an EM algorithm to quantity the users' data qualities in mobile crowdsensing. However, all of these works are based on unsupervised learning models, and thus may suffer from the initialization problem when most data are inaccurate \cite{li2016survey}. Yin and Tan \et \cite{yin2011semi} proposed a semi-supervised learning model to identify true facts from false ones. However, their work tended to focus on the truth estimation part, but did not output the reliability levels of the data sources, thus cannot address the need of user reliability profiling.

\textbf{Recommender System}: Recommender system has been a hot topic in recent decades. Generally, recommendation techniques can be classified into the following three categories: content-based recommendation, collaborative filtering-based recommendation, and hybrid recommendation \cite{adomavicius2005toward}. Besides various recommendation techniques, many practical issues in recommender systems have also been widely studied, including exploiting implicit feedback \cite{oard1998implicit,rendle2009bpr}, addressing negative feedback \cite{carroll1993explicit,lee2009reinforcing}, context-aware recommendation \cite{adomavicius2011context}, group recommendation \cite{masthoff2011group}, and so on. Nevertheless, these works only focused on the users' preferences, without considering the users' reliability. In contrast, in mobile crowdsensing, the users' reliability plays an important role in the effectiveness of the system, and thus should be taken into account in recommending tasks. To that end, we extend the traditional recommender systems by taking the users' reliability into the consideration and proposing to recommend tasks based on both the users' preference and reliability.

\section{Conclusion}
In this paper, we have studied the problem of personalized task matching in mobile crowdsensing. We have proposed a personalized task recommender framework that can recommend tasks to users based on a fine-grained characterization on both the users' preference and reliability. We have proposed methods to measure each user's preferences and reliability of different tasks, respectively. In particular, the proposed user reliability profiling algorithm originates from truth discovery problem, but surpasses existing truth discovery algorithms in three ways, \ie, by proposing a fine-grained multi-dimensional reliability profiling model, by exploiting the information of failed tasks, and also by incorporating a small number of ground truths to improve the estimation accuracy. Further more, we proposed a matrix factorization method to address a critical limitation of the existing truth discovery algorithms in estimating the users' reliability for the uninvolved tasks. Both a real-world experiment and a large-scale simulation have been conducted to evaluate our proposed methods. The evaluation results have demonstrated the good performance of our methods.

\bibliographystyle{IEEEtran}
\bibliography{bib}

\end{document}